\pdfoutput=1
\newcommand{\extended}[1]{}    %
\newcommand{\short}[1]{#1}     %

\documentclass[sigconf]{aamas}

\usepackage{balance} %

\usepackage[T1]{fontenc}
\usepackage[utf8]{inputenc}
\usepackage{graphicx}

\usepackage{kim}
\usepackage{wojtek15logics}
\usepackage{wojtek15other}

\usepackage{tabularx}
\usepackage{prooftree}
\usepackage{layouts}

\usepackage{bbm}

\usepackage[only,llbracket,rrbracket]{stmaryrd}
\newcommand{\sset}[1]{\llbracket{#1}\rrbracket}

\graphicspath{ {./images/} }

\makeatletter
\g@addto@macro{\UrlBreaks}{\UrlOrds}
\makeatother

  \makeatletter
  \DeclareFontEncoding{LS2}{}{\@noaccents}
  \makeatother
  \DeclareFontSubstitution{LS2}{stix}{m}{n}

  \DeclareSymbolFont{largesymbolsstix}{LS2}{stixex}{m}{n}

  \DeclareMathDelimiter{\lBrace}{\mathopen}{largesymbolsstix}{"E8}{largesymbolsstix}{"0E}
  \DeclareMathDelimiter{\rBrace}{\mathclose}{largesymbolsstix}{"E9}{largesymbolsstix}{"0F}
  \newcommand{\multiset}[1]{\lBrace{#1}\rBrace}



\setcopyright{ifaamas}
\acmConference[AAMAS '25]{Proc.\@ of the 24th International Conference
on Autonomous Agents and Multiagent Systems (AAMAS 2025)}{May 19 -- 23, 2025}
{Detroit, Michigan, USA}{Y.~Vorobeychik, S.~Das, A.~Nowe (eds.)}
\copyrightyear{2025}
\acmYear{2025}
\acmDOI{}
\acmPrice{}
\acmISBN{}

\acmSubmissionID{929}

\title[Practical Abstractions for Continuous-Time Multi-Agent Systems]{Practical Abstractions for Model Checking \\Continuous-Time Multi-Agent Systems}

\author{Yan Kim}
\affiliation{
  \institution{\instSNT}
  \city{Esch-sur-Alzette}
  \country{Luxembourg}%
}
\email{yan.kim@uni.lu}

\author{Wojciech Jamroga}
\affiliation{
  \institution{\instIPI}
  \city{Warsaw}
  \country{Poland}\\
  \institution{\instSNT}
  \city{Esch-sur-Alzette}
  \country{Luxembourg}
}
\email{w.jamroga@ipipan.waw.pl}

\author{Wojciech Penczek}
\affiliation{
  \institution{\instIPI}
  \city{Warsaw}
  \country{Poland}%
}
\email{w.penczek@ipipan.waw.pl}

\author{Laure Petrucci}
\affiliation{
  \institution{\instLIPN}
  \city{Villetaneuse}
  \country{France}%
}
\email{laure.petrucci@lipn.univ-paris13.fr}

\begin{abstract}
  Model checking of temporal logics in a well established technique to verify and validate properties of multi-agent systems (MAS). However, practical model checking requires input models of manageable size. In this paper, we extend the model reduction method by variable-based abstraction, proposed recently by Jamroga and Kim, to the verification of real-time systems and properties. To this end, we define a real-time extension of MAS graphs, extend the abstraction procedure, and prove its correctness for the universal fragment of Timed Computation Tree Logic (TCTL). Besides estimating the theoretical complexity gains, we present an experimental evaluation for a simplified model of the Estonian voting system and verification using the Uppaal model checker.
\end{abstract}

\keywords{model checking abstraction; real-time systems; timed automata}

\newcommand{\BibTeX}{\rm B\kern-.05em{\sc i\kern-.025em b}\kern-.08em\TeX}
\newcommand{\memout}{\textsc{memout}}

\newcommand{\Rplus}{\mathbb{R}_{+}}

\newcommand{\Clocks}{\mathcal{X}}
\newcommand{\Vars}{\mathcal{V}}
\newcommand{\Inv}{\mathcal{I}}
\newcommand{\Constr}{\mathcal{C}}

\newcommand{\Loc}{\mathit{Loc}}
\newcommand{\Eval}{\mathit{Eval}}
\newcommand{\Chan}{\mathit{Chan}}
\newcommand{\Sync}{\mathit{Sync}}
\newcommand{\Act}{\mathit{Act}}
\newcommand{\Cond}{\mathit{Cond}}
\newcommand{\Effect}{\mathit{Effect}}
\newcommand{\Edges}{E}
\newcommand{\ag}[1]{^{#1}}

\newcommand{\Paths}{\mathit{Paths}}
\newcommand{\Reach}{\mathit{Reach}}

\newcommand{\States}{\mathit{St}}
\newcommand{\Dom}{domain}
\newcommand{\clc}{\mathfrak{cc}}

\newcommand{\abstr}{\mathcal{A}}
\newcommand{\may}{\mathit{may}}

\newcommand{\mayabstr}{\abstr^{\may}}

\newcommand{\uext}{_\circleddash} %

\newcommand{\Scope}{\textit{Sc}}
\newcommand{\Argsd}{\mathit{Args}_{R}}
\newcommand{\Argsr}{\mathit{Args}_{N}}

\newcommand{\updcomp}{\circ}
\newcommand{\sini}{\mathit{ini}}

\newcommand{\nMG}{\mathcal{N}}

\newcommand{\myinterval}[1]{{\mbox{\ensuremath{[#1]}}}}
\newcommand{\pii}[1]{\pi\myinterval{#1}}

\newcommand*{\AddNoteSmall}[4]{%
    \begin{tikzpicture}[overlay, remember picture]
        \draw [decoration={brace,amplitude=0.5em},decorate,thick,mdgreen]
            ($(#3)!([yshift=1ex]#1)!($(#3)-(0,1)$)$) --
            ($(#3)!(#2.south)!($(#3)-(0,1)$)$)
                node [inner sep=10pt, pos=0.5, anchor=west, text width=1.2cm] {{\small #4}};
    \end{tikzpicture}
}%

\SetKwFunction{approxLocalDomain}{ApproxLocalDomain}
\SetKwFunction{overApproxLocalDomain}{OverApproxLocalDomain}
\SetKwFunction{visitLoc}{VisitLoc}
\SetKwFunction{procEdgeLabels}{ProcEdge}
\SetKwFunction{procPreCond}{ProcPreCond}
\SetKwFunction{enqueueSucc}{EnqueueDirectSucc}
\SetKwFunction{procSelfLoops}{ProcSelfLoops}
\SetKwFunction{procIncEdges}{ProcIncEdges}
\SetKwFunction{computeAbstraction}{ComputeAbstraction}
\SetKwFunction{computeMergeAbstraction}{ComputeAbstraction}
\SetKwFunction{generalVarAbstraction}{ComputeAbstraction}
\SetKwFunction{extractMax}{ExtractMax}
\SetAlgoLined\DontPrintSemicolon

\newlength{\semanticsnewlinebreak}
\setlength{\semanticsnewlinebreak}{6pt}

\newcommand{\dmax}{d_{\mathit{max}}}

\begin{document}

\pagestyle{fancy}
\fancyhead{}

\maketitle

\section{Introduction}

Temporal logics have been extensively used to formalize properties of agent systems, including reachability, liveness, safety, and fairness~\cite{Emerson90temporal}.
Moreover, temporal model checking is a popular approach to formal verification of MAS~\cite{baier2008principles,Clarke18principles}.
However, the verification is known to be hard, both theoretically and in practice.
State-space explosion is a major obstacle here, as models of real-world systems are huge and infeasible even to generate, let alone verify. In consequence, model checking of MAS w.r.t.~their \emph{modular representations} ranges from \PSPACE-complete to undecidable~\cite{Schnoebelen03complexity,Bulling10verification}.

Much work has been done to contain the state-space explosion by smart representation and/or reduction of input models. Symbolic model checking based on SAT- or BDD-based representations of the state/transition space~\cite{McMillan93symbolic-mcheck,McMillan02unbounded,Penczek03ctlk,Kacprzak04verifying,Lomuscio07tempoepist,Huang14symbolic-epist,lomuscio17mcmas} fall into the former group.
Model reduction methods include partial-order reduction~\cite{Peled93representatives,Gerth99por,Jamroga20POR-JAIR}, equivalence-based reductions~\cite{Bakker84equivalences,Alur98refinement,Belardinelli21bisimulations}, and state abstraction~\cite{Cousot77abstraction}, see below for a detailed discussion.

In this paper, we extend the idea of \emph{variable-based abstraction}~\cite{Jamroga23variableabstraction,Jamroga23easyabstract} to the verification of \emph{real-time multi-agent systems}~\cite{alur1993model,Brihaye07timedCGS,Arias23stratTCTL,Penczek06advances}.
Similarly to~\cite{Jamroga23variableabstraction,Jamroga23easyabstract}, our abstraction operates entirely on the high-level, modular representation of an asynchronous MAS.
That is, it takes a \emph{concrete modular representation} of a MAS as input, and generates an \emph{abstract modular representation} as output. %
Moreover, it produces the abstract representation without generation of the explicit state model, thus avoiding the usual computational bottleneck.

\para{Related Work.}
State abstraction was introduced in~\cite{Cousot77abstraction}, and studied intensively in the context of temporal verification~\cite{Clarke94abstraction,Godefroid02abstraction,Dams18abstraction+refinement,Clarke00cegar,Shoham04abstraction,Clarke00cegar,Clarke03cegar,Cimatti02nusmv}.
However, those works propose lossless abstraction that typically obtain up to an order of magnitude reduction of the state space and output models that are still too large for practical verification. 
Here, we focus on lossy may abstractions, based on user-defined equivalence relations~\cite{Dams97abstraction,Godefroid01abstractionbased,Godefroid02abstraction,Godefroid14abstraction-software,Gurfinkel06yasm,Godefroid10yogi,Enea08abstractions,Cohen09abstraction-MAS,Lomuscio10dataAbstraction}.

May/must abstractions for strategic properties have been investigated in~\cite{Alfaro04three,Ball06abstraction,Kouvaros17predicateAbstraction,Belardinelli17abstraction,Belardinelli19abstractionStrat}.
In all those cases, the abstraction method is defined directly on the concrete model, i.e., it requires to first generate the concrete global states and transitions, which is exactly the bottleneck that we want to avoid.
In contrast, our method operates on modular (and compact) model specifications, both for the concrete and the abstract model.
Data abstraction methods for infinite-state MAS~\cite{Belardinelli11data-abstraction,Belardinelli17dataAware} come close in that respect, but they still generate explicit state models.
Even closer, \cite{Jamroga23variableabstraction,Jamroga23easyabstract}~proposed recently a user-friendly abstraction scheme via removal of variables in the modular agent templates.
However, all the above approaches deal only with the verification of \emph{untimed} models and properties.

Timed models of MAS and their verification have been studied for over 30 years now, see e.g.~\cite{alur1993model,Brihaye07timedCGS,Arias23stratTCTL} and especially~\cite{Penczek06advances} for an overview.
Time-abstracting bisimulation for timed automata was studied in~\cite{tripakis2001analysis}.
In this paper, we extend the ideas and results from~\cite{Jamroga23variableabstraction} to models of real-time asynchronous MAS of~\cite{Arias23stratTCTL}.

The study of MAS using Uppaal was conducted in~\cite{gu2022verifiable} and~\cite{gu2022correctness}. 
However, its authors tackled a different problem --- strategy synthesis, in a different setting --- stochastic timed games. 
In some ways, their methods can be seen as complementary to ours, as they first synthesize the (witness) strategy and then check its validity; whereas we mainly focus on the verification of safety properties.

\section{Reasoning about Real-Time MAS}
\label{sec:prelim}

We start with extending the (untimed) representations of asynchronous MAS in~\cite{Jamroga23variableabstraction} to their timed counterparts.

We first adopt the usual definitions of clocks in timed systems, e.g. \cite{alur1993model}.
Let $\Clocks=\set{x_1,\ldots,x_{n_{\Clocks}}}$ be a finite set of clock variables.
A clock valuation is a mapping $\upsilon:\Clocks\mapsto \Rplus$.\footnote{By $\Rplus$ we denote the set of non-negative real numbers.}
Given a valuation $\upsilon$, a delay $\delta\in\Rplus$ and $X \subseteq \Clocks$,
$\upsilon+\delta$ denotes the valuation $\upsilon'$, such that $\upsilon'(x)=\upsilon(x)+\delta$ for all $x\in\Clocks$, and
$\upsilon[X=0]$ denotes the valuation $\upsilon''$, such that $\upsilon''(x)=0$ for all $x\in X$ and  $\upsilon''(x)=\upsilon(x)$ for all $x\in \Clocks\setminus X$.

The set $\Constr_\Clocks$ of clock constraints over $\Clocks$ is inductively defined by the following grammar:
\quad
\[
  \clc ::= \top\ \mid\ x_i \sim c\ \mid\ x_i-x_j \sim c\ \mid\ \clc \wedge \clc,
\]
\noindent%
where $\top$ denotes the truth value, $\forall_{i,j\in\set{1,\ldots, n_\Clocks}} x_i,x_j\in\Clocks$, $c\in\mathbb{N}$, and ${\sim}\in\set{<,\leq,=,\geq,>}$.

Let $\Vars=\set{v_1,\ldots,v_{n_{\Vars}}}$ be a finite set of discrete (typed) numeric variables over finite domains. %
By $\Eval_\Vars$ we denote the set of evaluations over the set of (discrete) variables $\Vars$; an evaluation $\eta\in\Eval_\Vars$ maps every variable $v\in\Vars$ to a literal from their domain $\eta(v)\in \Dom(v)$. 
Expressions are constructed from variables $\Vars$ and literals $\bigcup_{i=1}^{n_{\Vars}}\Dom(v_i)$ using the arithmetic operators. 
Atomic formulas/conditions are built from expressions and relation symbols. 
They can be further combined with logical connectives to form predicates. 
The set of all possible predicates over the variables $\Vars$ is denoted by $\Cond_\Vars$. 

The sets of all valuations satisfying $\clc\in\Constr_\Clocks$ and all evaluations satisfying $g\in\Cond_\Vars$ shall be denoted by $\sset{\clc}$ and $\sset{g}$ respectively. 
For $v\in\Vars$, $k\in\Dom(v)$, by $g[v=k]$ we denote the substitution of all occurrences of the variable $v$ in $g$ with the literal $k$; analogously, for $V\subseteq \Vars$, $K=\set{k_{v}\in\Dom(v) \mid v\in V}$, by $g[V=K]$ we denote $g[v=k_v \mid v\in \Vars]$.

For simplicity, we assume that both $\Clocks$ and $\Vars$ have (some) ordering fixed.
In the sequel, the terms \emph{variables} and \emph{clocks} will mean \emph{discrete variables} and \emph{continuous variables} respectively. %

{\setlength{\abovecaptionskip}{0.1ex}%
\setlength{\belowcaptionskip}{0.2ex}%
\begin{figure}[!tbh]
	\centering
	\includegraphics[width=0.69\columnwidth]{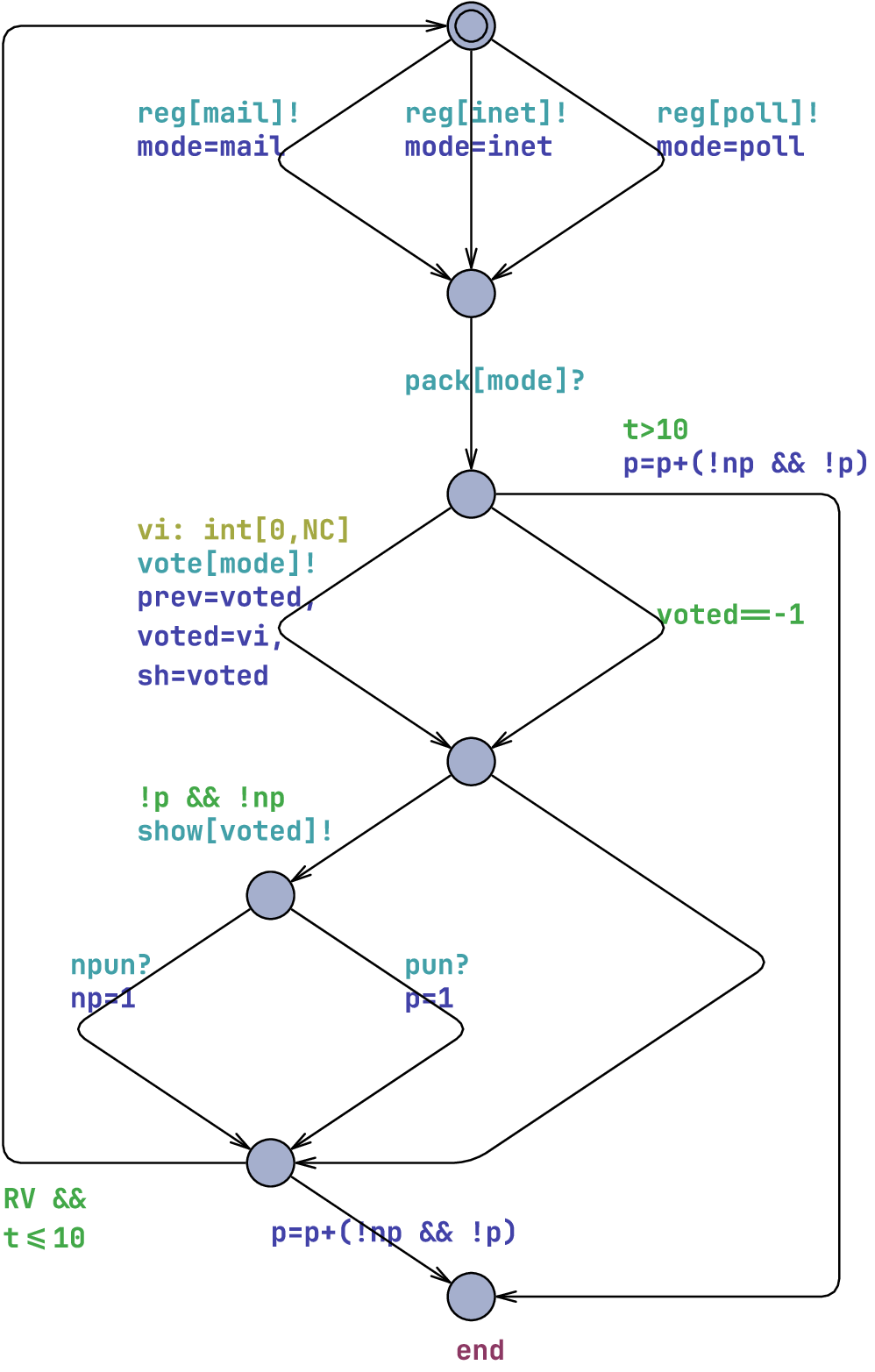}
	\caption{Timed agent graph for the Voter}
	\label{fig:voter-tag}
\end{figure}}
{\setlength{\abovecaptionskip}{0.1ex}%
\setlength{\belowcaptionskip}{0.2ex}%
\begin{figure}[!tbh]
	\centering
	\includegraphics[width=0.95\columnwidth]{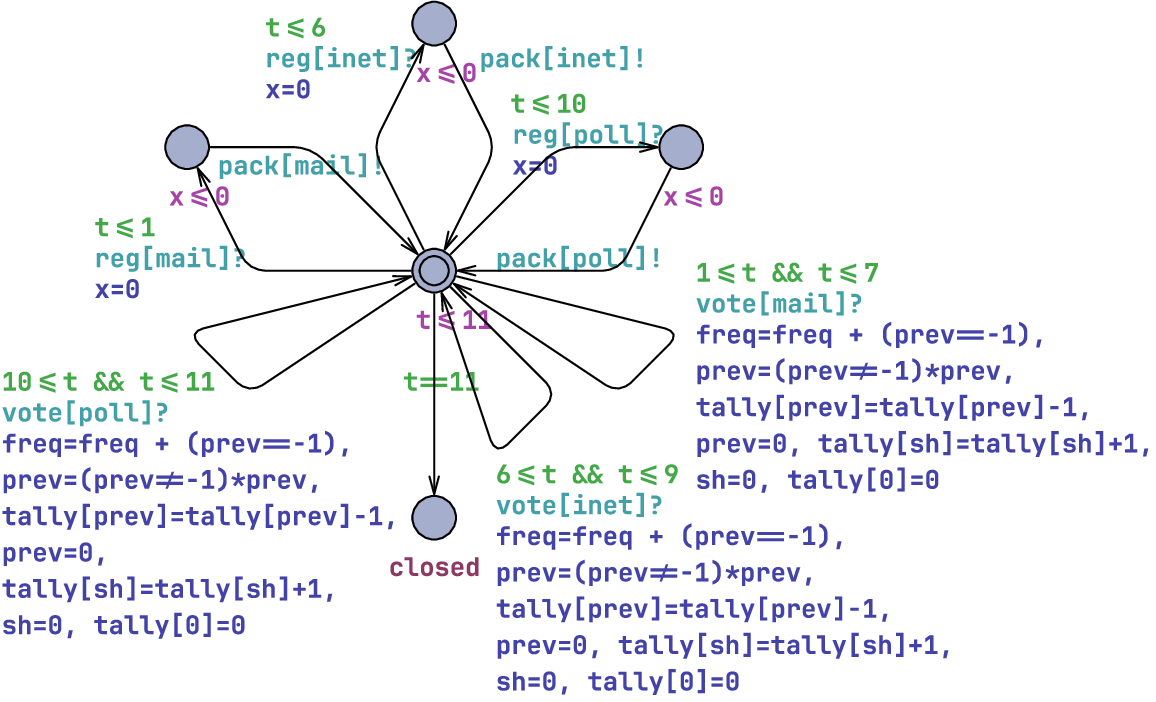}
	\caption{Timed agent graph for the Authority}
	\label{fig:authority-tag}
\end{figure}}
{\setlength{\abovecaptionskip}{0.1ex}%
\setlength{\belowcaptionskip}{0.2ex}%
\begin{figure}[!tbh]
	\centering
	\includegraphics[width=0.74\columnwidth]{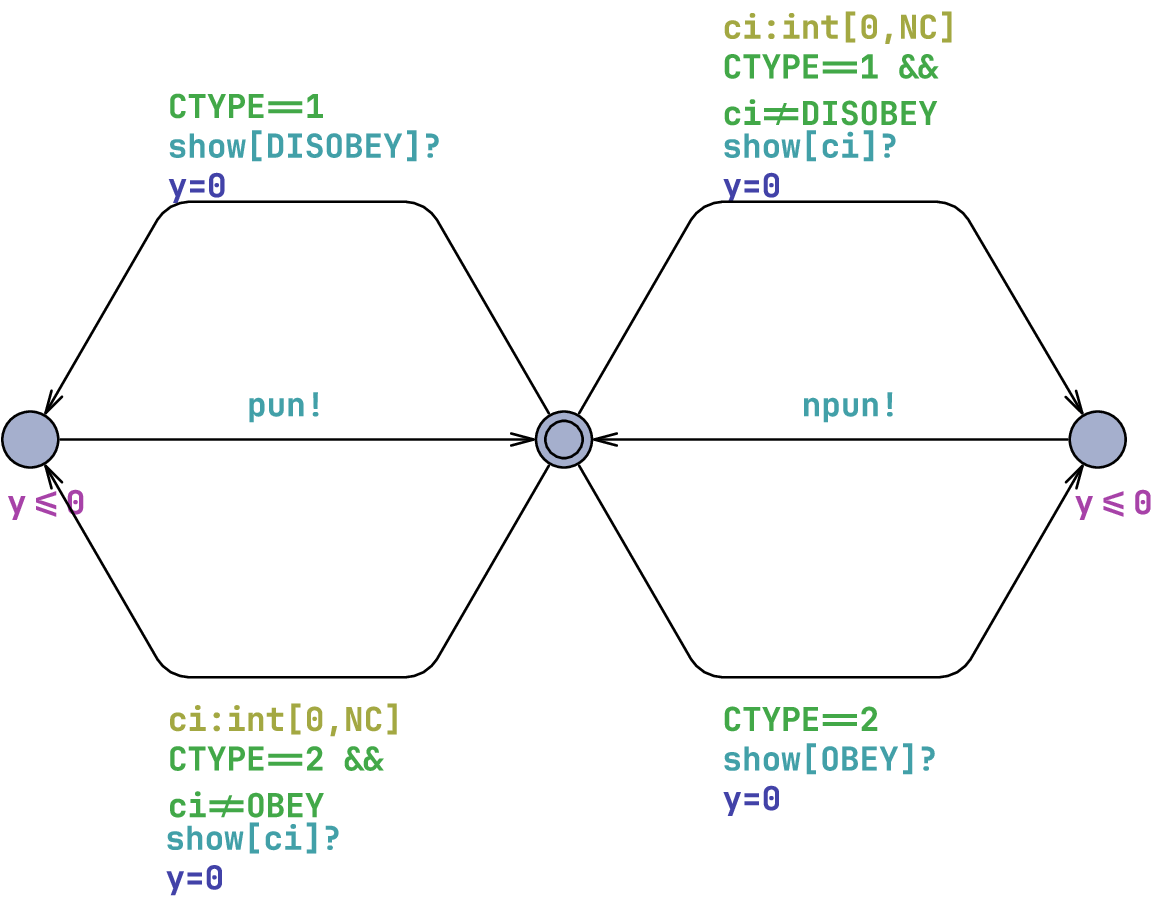}
	\caption{Timed agent graph for the Coercer}
	\label{fig:coercer-tag}
\end{figure}}

\subsection{Multi-Agent Graphs with Clocks}

In this section, we first introduce the specification of an individual agent, and
a set of agents. The associated model is then defined, as well as its behaviour.

\begin{definition}[TAG] %
  \label{def:t-agent-graph}
  A \emph{timed agent graph} (TAG) is a 10-tuple $G=(\Vars, \Loc, l_0, g_0, \Act, \Effect, \Chan, \Clocks, \Inv, \Edges)$, where:
  \begin{itemize}
    \item $\Vars$ is a finite set of variables,
    \item $\Loc$ is a finite set of locations, $l_0\in\Loc$ is the initial location,
    \item $g_0\in\Cond_\Vars$ is the initial condition, %
    s.t. $\restr{\sset{g_0}}{\Vars}$ is a singleton,\footnote{%
      In other words, $ \sset{g_0}\neq \varnothing$ and  $(\eta_1,\eta_2 \in \sset{g_0}) \Rightarrow (\forall_{v\in\Vars}\eta_1(v)=\eta_2(v))$.
    }
    \item $\Act$ is a set of actions, where $\tau\in\Act$ stands for ``do nothing'',
    \item $\Effect: \Act\times\Eval_\Vars\mapsto\Eval_\Vars$ is the effect function, such that $\Effect(\tau,\eta)=\eta$ for any $\eta\in\Eval_\Vars$, 
    \item $\Chan$ is a finite set of asymmetric one-to-one synchronisation channels; by \extended{$\Sync=\set{ch!, ch? \mid ch\in\Chan}\cup\set{-}$}\short{$\Sync$} we denote the set of synchronisation labels of the form ``$ch!$'' and ``$ch?$'' for emitting and receiving on a channel $ch\in \Chan$ respectively, and ``$-$'' for no synchronisation,
    \item $\Clocks$ is a finite set of clocks,
    \item $\Inv: \Loc\mapsto \Constr_\Clocks$ is a location invariant,
    \item $\Edges \subseteq \Loc \times \Cond_\Vars \times  \Constr_\Clocks \times \Sync \times \Act  \times \powerset{\Clocks} \times \Loc$ is a finite set of labelled edges, which define the local transition relation.
  \end{itemize}

  Following the common practice, the notation  $l\xhookrightarrow{g,\clc,\varsigma,\alpha,X}l'$ shall be used as a shorthand for $(l,g,\clc,\varsigma,\alpha,X,l')\in \Edges$. %
  Given a synchronisation label $\varsigma$, its complement denoted by $\overline{\varsigma}$ is defined as $\overline{ch!}=ch?$, $\overline{ch?}=ch!$ and $\overline{\varsigma}=\varsigma$ when $\varsigma=-$.
\end{definition}

Example timed agent graphs for a voting and coercion scenario are shown in~\cref{fig:voter-tag,fig:authority-tag,,,fig:coercer-tag}. %
The scenario is explained in more detail in~\cref{sec:experiments}. 
In order to improve readability, %
the truth valued invariants, %
as well as the edge label components for $g=\top$, $\clc=\top$, $\varsigma=-$, $X=\varnothing$, and $\alpha=\tau$ %
are not explicitly depicted.

\begin{definition}[TMAS Graph]
  \label{def:t-multi-agent-graph}
  A \emph{timed multi-agent system graph} is a multiset%
  \footnote{
    In order to avoid any possible confusion with the ordinary sets, we shall denote
 a multiset container using the ``$\lBrace$'' and ``$\rBrace$'' brackets. %
  } %
  of timed agent graphs $MG=\multiset{G\ag{1}, \ldots, G\ag{\nMG}}$ with distinguished%
  \footnote{Note that the set of shared variables must be explicitly pointed (rather than, for example, being derived from those occurring in two or more agent graphs) due to the possibility of a TMAS graph containing multiple instances of the same agent graph, which in turn may have both shared and non-shared variables.} %
  set of shared variables $\Vars_{\mathit{sh}}$.
  For simplicity, we assume that $MG$ has (some) fixed ordering of its elements.
\end{definition}

\subsection{Models of Timed MAS Graphs}

A \emph{combined TMAS graph} merges the agent graphs in $MG$ into a single agent graph whose nodes represent the possible tuples of locations in $MG$.

\begin{definition}[Combined TMAS Graph]
  \label{def:t-combined-graph}
  Given a TMAS graph $MG=\multiset{G\ag{1}, \ldots, G\ag{\nMG}}$ with the set of shared variables $\Vars_{\mathit{sh}}$,
  where $G\ag{i}=(\Vars\ag{i}, \Loc\ag{i}, l_{0}\ag{i}, g_{0}\ag{i}, \Act\ag{i}, \Effect\ag{i}, \Chan\ag{i}, \Clocks\ag{i}, \Inv\ag{i}, \Edges\ag{i})$, $1\leq i \leq \nMG$, the \emph{combined TMAS graph} of $MG$ is defined as a timed agent graph $G_{MG}=(\Vars, \Loc, l_0, g_0, \Act, \Effect, \Chan, \Clocks, \Inv, \Edges)$, where ${\Chan=\varnothing}$, 
  $\Vars = \bigcup_{i=1}^{\nMG} \Vars\ag{i}$, $\Loc=\prod_{i=1}^{\nMG} \Loc\ag{i}$, $l_0=(l_{0}\ag{1},\ldots,l_{0}\ag{\nMG})$, $\Clocks=\bigcup_{i=1}^{\nMG}\Clocks\ag{i}$, and 
  $g_0=(g_{0}\ag{1}\wedge\ldots\wedge g_{0}\ag{\nMG})$.
  The location invariant is defined as $\Inv(l\ag{1},\ldots,l\ag{\nMG})=\Inv\ag{1}(l\ag{1})\wedge\ldots\wedge\Inv\ag{\nMG}(l\ag{\nMG})$ and the set of actions $\Act = \set{ \alpha\ag{u_1}\updcomp\ldots\updcomp\alpha\ag{u_k}\mid \alpha\ag{i}\in\Act\ag{i}, i\in \{u_1,\ldots, u_k\}\subseteq \set{1,\ldots,\nMG}}$,
  The effect function $\Effect:\Act\times\Eval_\Vars\mapsto \Eval_\Vars$ is defined by:
  \[\small\Effect(\alpha,\eta) =
  \begin{cases}
    \eta[\Vars\ag{i}=\Effect\ag{i}(\alpha,\eta_{|\Vars\ag{i}})(\Vars\ag{i})] &\text{if } \alpha\in \Act\ag{i}\\
    \Effect(\alpha\ag{i},\Effect(\alpha\ag{j},\eta)) &\text{if } \alpha = \alpha\ag{i}\updcomp\alpha\ag{j}
  \end{cases}\]
  where $\eta[X=Y]$ denotes the evaluation $\eta'$, such that $\restr{\eta'}{X}=\set{Y}$ and $\restr{\eta'}{\Vars\setminus X}=\restr{\eta}{\Vars\setminus X}$. 

  \noindent
  Given a pair of agents $G\ag{i}$ and $G\ag{j}$ of distinct indices $i\neq j$ with $(l_1,g_1,\clc_1,\varsigma,\alpha_1,X_1,l_{1}')\in \Edges\ag{i}$, $(l_2,g_2,\clc_2,\overline{\varsigma},\alpha_2,X_2,l_{2}')\in \Edges\ag{j}$, the set of labelled edges $\Edges$ is obtained inductively by applying the  rules:\\[6pt]
  \noindent
\begin{tabularx}{\linewidth}{@{}l r}
	$\begin{prooftree}
			l_{1} \lhook\joinrel\xrightarrow{g_1,\clc_1,ch!,\alpha_1,X_1} l_{1}',\ l_{2} \lhook\joinrel\xrightarrow{g_2,\clc_2,ch?,\alpha_2,X_2} l_{2}'
			\justifies
			l_{1},l_{2}\lhook\joinrel\xrightarrow{g_1\wedge g_1, \clc_1\wedge\clc_2, -, \alpha_2\updcomp \alpha_1, X_1\cup X_2 }l_{1}',l_{2}'
  \end{prooftree}$	
	&%
	$\begin{prooftree}
		l_{1} \lhook\joinrel\xrightarrow{g_1,\clc_1,-,\alpha_1,X_1} l_{1}'
		\justifies
		l_{1},l_{2} \lhook\joinrel\xrightarrow{g_1,\clc_1,-,\alpha_1,X_1} l_{1}',l_{2}
  \end{prooftree}$
	\\
	\\
	$\begin{prooftree}
    l_{1} \lhook\joinrel\xrightarrow{g_1,\clc_1,ch?,\alpha_1,X_1} l_{1}',\ l_{2} \lhook\joinrel\xrightarrow{g_2,\clc_2,ch!,\alpha_2,X_2} l_{2}'
    \justifies
    l_{1},l_{2}\lhook\joinrel\xrightarrow{g_1\wedge g_1, \clc_1\wedge\clc_2, -, \alpha_1\updcomp \alpha_2, X_1\cup X_2 }l_{1}',l_{2}'
  \end{prooftree}$	
	&%
	$\begin{prooftree}
		l_{2} \lhook\joinrel\xrightarrow{g_2,\clc_2,-,\alpha_2,X_2} l_{2}'
		\justifies
		l_{1},l_{2} \lhook\joinrel\xrightarrow{g_2,\clc_2,-,\alpha_2,X_2} l_{1},l_{2}'
  \end{prooftree}$
\end{tabularx}
\end{definition}

The \emph{model} further unfolds the combined TMAS graphs by explicitly representing the reachable valuations of model variables.

\begin{definition}[Model]
  \label{def:t-model}
  The \emph{model} of a timed agent graph $G$ over $AP$ is a 5-tuple $\mathcal{M}(G)=(\States, \sini, \longrightarrow, AP, L)$, where:
  \begin{itemize}
    \item $\States = \Loc\times \Eval_{\Vars} \times \Rplus^{n_{\Clocks}}$ is a set of global states,
    \item $\sini=(l_0,\eta_0,\upsilon_0)\in\States$, such that  $\eta_0\in \sset{g_0}$ and $\forall_{0\leq i\leq n_{\Clocks}}\upsilon_0(x_i)=0$, is an initial global state,
    \item $\longrightarrow\subseteq \States\times\States$ is the transition relation composed of:\\
    -- \emph{delay-transitions}: $(l,\eta,\upsilon)\xlongrightarrow{{\delta}}(l,\eta,\upsilon+\delta)$ for $\delta\in \Rplus$ with $\upsilon,\upsilon+\delta \in \sset{\Inv(l)}$,\\
    -- \emph{action-transitions}: $(l,\eta,\upsilon)\xlongrightarrow{\smash{\alpha}}(l',\eta',\upsilon')$ for $l\xhookrightarrow{\smash{g,\clc,-,\alpha,X}}l'$ with $\upsilon\in\sset{\Inv(l)}$, $\upsilon'\in\sset{\Inv(l')}$,  $\eta\in\sset{g}$, $\upsilon\in\sset{\clc}$, $\eta'=\Effect(\alpha,\eta)$ and $\upsilon'=\upsilon[X=0]$, %
    \item $AP\subseteq (\Cond_{\Vars}\cup\Loc)$ is a finite set of atomic propositions,
    \item $L: \States \mapsto \powerset{AP}$ is a labelling function, such that\\$L(l,\eta,\upsilon)\subseteq(\set{g\in \Cond_{\Vars} \mid \eta\in\sset{g}}\cup\set{l})$.
  \end{itemize}
  The model $\mathcal{M}$ of a TMAS graph $MG$ is given by the model of its combined TMAS graph, i.e. $\mathcal{M}(MG)=\mathcal{M}(G_{MG})$.
\end{definition}

We now formally define paths of the model.
For this paper, we consider only \emph{progressive}%
\footnote{Also called \emph{time-divergent}.} %
paths that are free from the timelocks and deadlocks.
In general, this is the property that valid models of the system should have.
Furthermore, such a condition can be checked both statically and dynamically (see
e.g.,~\cite{Penczek06advances}).

\begin{definition}
  A \emph{path} from the state $s_0\in\States$ of the model $M$ is an infinite sequence of states $\pi=s_0 s_1 s_2 \ldots$, such that $s_i\in\States$, $s_{2i}\xlongrightarrow{\delta_i} s_{2i+1}\xlongrightarrow{\alpha_i}s_{2i+2}$, where $\delta_i\in\Rplus$, $\alpha_i\in \Act$, for every $i \geq 0$ and $\sum_{i\in\mathbb{N}}\delta_i=\infty$.
  Given $i\geq 0$, by $\pii{i}=s_i$ and $\pii{i:}=s_i s_{i+1}\ldots$ we denote $i$-th state and $i$-th suffix of $\pi$ respectively.
  An \emph{initial path} of a model $M$ is a path $\pi$ that starts with the initial state
  $\sini$, that is $\pii{0}=\sini$.
  The set of all paths of $M$ is denoted by $\Paths_M$, the set of all paths starting in $s\in\States$ \extended{is denoted }by $\Paths_M(s)$.\footnote{%
    When a model $M$ is clear from the context, the subscript $M$ is omitted.
  }

  A state $s\in\States$ is \emph{reachable} in $M$ iff there exists an initial path $\pi\in\Paths_M$, such that $\pi[i]=s$ for some $0\leq i < \infty$.
  The set of all reachable states in $M$ is denoted by $\Reach(M)$.

  An evaluation $\eta\in\Eval_\Vars$ is \emph{reachable} at $l\in\Loc$ iff there is reachable state of the form $(l,\eta,\upsilon)\in\Reach(M)$ for some $\upsilon\in\Rplus^{n_{\Clocks}}$.

  A \emph{local domain} is a function $d:\Loc\mapsto \powerset{\Eval_{\Vars}}$ that maps every location $l\in\Loc$ to the set of its reachable evaluations, that is $d(l)=\set{\eta\in\Eval_{\Vars} \mid (l,\eta,\upsilon)\in\Reach(M)}$.
\end{definition}

\subsection{Logical Reasoning about TMAS Graphs}

We shall now define the branching-time (timed) logic $\TCTLs$~\cite{bouyer2018model} that generalizes $\TCTL$~\cite{alur1993model}, $\CTL$~\cite{Clarke81ctl}.

For a set of atomic propositions $AP$, the syntax of \TCTLs state formulae $\varphi$ and path formulae $\psi$ is given by the following grammar:
\begin{gather*}
  \varphi ::=\ p\ \mid\ \neg \varphi \ \mid\ \varphi \vee \varphi \ \mid\ \Apath \psi \ \mid\ \Epath \psi, \\
  \psi ::=\ \varphi\ \mid\ \neg \psi\ \mid\ \psi \vee \psi \ \mid\ \psi \Untili \psi,
\end{gather*}
\noindent%
where $p\in AP$, $I$ is an interval in $\Rplus$ with integer bounds of the form $[a,b]$, $(a,b]$, $[a,c)$, $(a,c)$ for $a,b\in\mathbb{N}$, $c\in\mathbb{N}\cup\set{\infty}$. %
The temporal operator $\Until$ stands for ``until'', the path quantifiers $\Apath$ and $\Epath$ stand for ``for all paths'' and ``exists a path'' respectively.
Intuitively, an interval $I$ constrains the modal operator it subscribes to. %
Boolean connectives and additional constrained temporal operators ``sometime'' (denoted by $\Sometm_I$) and ``always'' (denoted by $\Always_I$) can be derived as usual.
In particular, $\Sometm_I \varphi \equiv \top \Untili \varphi$, $ \Always_I \varphi \equiv \neg\Sometm_I \neg \varphi$.
We will sometimes omit the subscript for $I=[0,\infty)$, writing $\Until$ as a shorthand for $\Until_{[0,\infty)}$.

From the above, we can define other important logics:
\begin{description}[leftmargin=12mm]
  \item[\TCTL\hspace{2mm}] restriction of $\TCTLs$, where every occurrence of temporal operators is always preceded with a path quantifier,
  \item[\TACTL] restriction of $\TCTL$, where negation can only be applied to propositions
  and only the $\Apath$ path operator is used,
  \item[\CTLs\hspace{2mm}] obtained from \TCTLs with only trivial subscript intervals $I=[0,\infty)$, adding modal operator ``next'' to the syntax, 
  \item[\ACTLs] restriction of $\CTLs$, where negation can only be applied to propositions
  and only the $\Apath$ path operator is used.
\end{description}

Let $\pi=s^{\ }_0 s^{\delta_0}_0 s^{\ }_1 s^{\delta_1}_1\ldots$ be a path of the model $M$, s.t. $s^{\ }_i\xlongrightarrow{\delta_i}s^{\delta_i}_i$ and $s^{\delta_i}_i\xlongrightarrow{\alpha_i}s^{\ }_{i+1}$ for $i\geq 0$, where 
$s^{\delta}_i=(l_i,\eta_i,\upsilon_i+\delta)$ for $\delta\in\Rplus$ and $s^{\ }_i=s^{0}_i$, 
and let $\pi(i,\delta)=s^{\delta_{\,}}_i s^{\delta_i}_i s^{\ }_{i+1} s^{\delta_{i+1}}_{i+1}\ldots$  denote the suffix of $\pi$ for $i\geq 0$ and $\delta\leq \delta_i$, such that $s^{\delta_{\,}}_i\xlongrightarrow{\delta_i-\delta}s^{\delta_i}_i$.
The semantics of \TCTLs is as follows:\footnote{The omitted clauses for the Boolean connectives are immediate.} %
\begin{alignat*}{3}
  & M,s\models p                             &  & \quad\text{iff }p\in L(s)                                                         \\
  & M,s\models \Apath\psi                 &  & \quad\text{iff } M,\pi\models\psi\text{, for all }\pi\in\textit{Paths}(s)       \\
  & M,s\models \Epath\psi                 &  & \quad\text{iff } M,\pi\models\psi\text{, for some }\pi\in\textit{Paths}(s)       \\[\semanticsnewlinebreak]
  & M,\pi\models \varphi                     &  & \quad\text{iff } M,\pii{0}\models\varphi \\
  & M,\pi\models \psi_1\Untili\psi_2   &  & \quad\text{iff } \exists i\geq 0\ldotp\exists \delta\leq \delta_i\ldotp\left(
                                            (\textstyle\sum_{j < i}\delta_j + \delta)\in I\right) \text{, and }\\
  &                                    &  & \quad\hspace*{\widthof{\text{iff }}}  
                                            M,\pi(i,\delta) \models \psi_2\text{, and }\\
  &                                    &  & \quad\hspace*{\widthof{\text{iff }}}
                                            \forall \delta'<\delta\ldotp M,\pi(i,\delta') \models \psi_1\text{, and }\\ 
  &                                    &  & \quad\hspace*{\widthof{\text{iff }}}
                                            \forall {j < i}\ldotp\forall {\delta' \leq \delta_j}\ldotp M,\pi(j,\delta')\models\psi_1.
\end{alignat*}
\noindent 
The model $M$ satisfies the \TCTLs (state) formula $\varphi$ (denoted by $M\models \varphi$) iff $M,\sini\models \varphi$.

\section{Simulation for TMAS Graphs}

We now propose a notion of \emph{simulation} between timed MAS graphs, that is later used to establish correctness of our abstraction scheme.

\begin{definition}
  \label{df:model-sim}
  Let $M_i=(\States_i, \sini_i, \longrightarrow_i, AP_i, L_i)$, $i=1,2$.
  A model $M_2$ \emph{simulates} model $M_1$ over $AP\subseteq AP_1\cap AP_2$ (denoted $M_1\precsim_{AP} M_2$) if there exists a \emph{simulation} relation $\mathcal{R}\subseteq \States_1\times \States_2$ such that:
  \begin{itemize}
    \item[(i)] $(\sini_1,\sini_2)\in\mathcal{R}$, and
    \item[(ii)] for all $(s_1,s_2)\in\mathcal{R}$:
    \begin{itemize}
      \item[(a)] $L_1(s_1)\cap AP = L_2(s_2)\cap AP$, and  %
      \item[(b)] if $s_1\longrightarrow_1 s_1'$, then $\exists{s'_2\in\States_2}$ s.t. $ s_2\longrightarrow_2 s_2'$ and $(s_1',s_2')\in\mathcal{R}$.
    \end{itemize}
  \end{itemize}

  If additionally $(s_1,s_2)\in \mathcal{R} \Rightarrow (\upsilon_1=\upsilon_2)$, where $s_i=(l_i,\eta_i,\upsilon_i)$ for $i=1,2$, then $\mathcal{R}$ is called the \emph{timed simulation} relation~\cite{bouyer2018model}.%
\end{definition}

\begin{theorem}
  \label{thm:tactl-pres}
  If there is timed simulation $\mathcal{R}\subseteq \States_1\times\States_2$ over $AP\subseteq AP_1\cap AP_2$, then for any formula ${\varphi\in\TACTLs}$ over $AP$: %
  $$M_2\models \varphi \qquad\text{implies}\qquad M_1\models \varphi.$$ 
\end{theorem}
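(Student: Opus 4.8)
The natural approach is structural induction on the formula $\varphi \in \TACTLs$, strengthened to a statement about all pairs in the simulation relation rather than just the initial states. Concretely, I would prove simultaneously by mutual induction two claims: (1) for every state formula $\varphi$ over $AP$ and every $(s_1,s_2)\in\mathcal{R}$, if $M_2,s_2\models\varphi$ then $M_1,s_1\models\varphi$; and (2) for every path formula $\psi$ over $AP$ and every pair of paths $\pi_1\in\Paths_{M_1}(s_1)$, $\pi_2\in\Paths_{M_2}(s_2)$ that are ``$\mathcal{R}$-related stepwise'' (every corresponding pair of states lies in $\mathcal{R}$), if $M_2,\pi_2\models\psi$ then $M_1,\pi_1\models\psi$. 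Applying (1) to $(\sini_1,\sini_2)\in\mathcal{R}$ (clause (i)) then yields the theorem. Note the direction: because $\TACTLs$ has no existential path quantifier and negation only in front of propositions, simulation (which only transfers \emph{existence} of successor moves from $M_1$ to $M_2$) suffices to transfer \emph{truth} in the reverse direction, from the abstract $M_2$ to the concrete $M_1$.

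The base case is propositional: for $p\in AP$, clause (ii)(a) gives $L_1(s_1)\cap AP = L_2(s_2)\cap AP$, so $p\in L_2(s_2) \iff p\in L_1(s_1)$; this also handles $\neg p$, which is the only form of negation allowed. Boolean connectives $\vee$ (and the derived $\wedge$) are immediate from the induction hypothesis. The interesting state-formula case is $\Apath\psi$: assume $M_2,s_2\models\Apath\psi$; given any $\pi_1\in\Paths_{M_1}(s_1)$, I must produce a stepwise-$\mathcal{R}$-related $\pi_2\in\Paths_{M_2}(s_2)$ and then invoke (2). This is where I would use clause (ii)(b) repeatedly to ``lift'' $\pi_1$ to $\pi_2$ step by step, starting from $(s_1,s_2)\in\mathcal{R}$: each transition $s_{1}^{(k)}\longrightarrow_1 s_{1}^{(k+1)}$ of $\pi_1$ is matched by some $s_{2}^{(k)}\longrightarrow_2 s_{2}^{(k+1)}$ with the new pair back in $\mathcal{R}$. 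Since $M_2,s_2\models\Apath\psi$ holds for \emph{all} paths from $s_2$, in particular for $\pi_2$, we get $M_2,\pi_2\models\psi$, and (2) transfers this to $M_1,\pi_1\models\psi$; as $\pi_1$ was arbitrary, $M_1,s_1\models\Apath\psi$.

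For the path-formula induction (2): the case $\psi=\varphi$ (a state formula on a path) reduces to (1) at the $0$-th states, which are $\mathcal{R}$-related by construction. The case $\psi=\neg\psi'$ does not actually arise at the top level in $\TACTLs$ except with $\psi'$ propositional, so it is again covered by the propositional base case; $\vee$ is immediate. The crux is $\psi_1\Untili\psi_2$. Here the key observation is that a stepwise-$\mathcal{R}$-related lifting preserves the \emph{timing data}: this is exactly where the ``timed'' strengthening of the simulation is needed --- clause $\upsilon_1=\upsilon_2$ forces the delay-transitions to agree, so the two paths have identical delay sequences $\delta_k$ and hence the cumulative-time sums $\sum_{j<i}\delta_j+\delta$ and the membership ``$\in I$'' coincide. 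Given that $M_2,\pi_2\models\psi_1\Untili\psi_2$, pick the witnessing index $i$ and offset $\delta$; the same $(i,\delta)$ works for $\pi_1$ by the timing agreement, and the nested universal/ existential subformula obligations ($M_2,\pi_2(i,\delta)\models\psi_2$ and the $\psi_1$ conditions at all earlier points) transfer to $\pi_1$ by the induction hypothesis applied to the appropriately shifted $\mathcal{R}$-related state pairs. I expect the \textbf{main obstacle} to be purely bookkeeping: setting up the lifting of an \emph{infinite} path together with a clean notion of ``stepwise-$\mathcal{R}$-related paths'' that simultaneously tracks the alternation of delay- and action-transitions and the decomposition $s_i^{\delta}$ used in the semantics of $\Until_I$, and checking that the timed-simulation clause indeed makes the delays match at every intermediate point $\delta'<\delta$, not just at the breakpoints --- this requires observing that a delay-transition $s_i\xlongrightarrow{\delta_i}s_i^{\delta_i}$ in $M_1$ is matched in $M_2$ by a delay-transition of the \emph{same} duration (otherwise $\upsilon_1=\upsilon_2$ would fail at the next related pair), and that intermediate delays are then automatically aligned. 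Progressiveness (time-divergence) of the lifted path $\pi_2$ follows because $\sum\delta_k=\infty$ is a property of the shared delay sequence.
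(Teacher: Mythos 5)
Your proposal is correct and follows essentially the same route as the paper's proof: structural induction on the formula (strengthened to all $\mathcal{R}$-related pairs), lifting an arbitrary concrete path $\pi_1$ to an $\mathcal{R}$-matched abstract path $\pi_2$ via clause (ii)(b), and using the timed-simulation condition $\upsilon_1=\upsilon_2$ to force equal delay sequences so that the interval constraint in $\Until_I$ transfers unchanged. The bookkeeping issues you flag (alternation of delay/action steps, intermediate offsets $\delta'<\delta$, time-divergence of the lifted path) are exactly the points the paper's sketch also passes over, and you handle them at least as explicitly.
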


\begin{proof}[Proof (Sketch)]
 The results showing that simulation preserves \ACTL and \ACTLs are well established~\cite{alur1993model,penczek2001abstractions}, this can be proven using structural induction on formula (cf.~\cite{Penczek06advances,baier2008principles}). 
  Almost the same line of reasoning can be applied to timed simulation and $\TACTLs$. 
  Here, we show that for a more interesting case, when $\varphi$ is of the form $\Apath \psi_1\Untili\psi_2$; the remaining cases are shown analogously as in~\ACTLs.
  
  Let $\mathcal{R}\subseteq\States_1\times\States_2$ be a timed simulation for $(M_1,M_2)$, and $\pi=s^{\ }_{i,0} s^{\delta_{i,0}}_{i,0}  s^{\ }_{i,1}  s^{\delta_{i,1}}_{i,1}\ldots$ denote a path of the model $M_i$ for $i=1,2$, such that $s^{\ }_{i,j}\xlongrightarrow{\delta_{i,j}}s^{\delta_{i,j}}_{i,j}$ and $s^{\delta_{i,j}}_{i,j}\xlongrightarrow{\alpha_{i,j}}s^{\ }_{i,j+1}$, 
  where $s^{\ }_{i,j}=(l_{i,j},\eta_{i,j},\upsilon_{i,j})$ and $s^{\delta}_{i,j}=(l_{i,j},\eta_{i,j},\upsilon_{i,j}+\delta)$, and let $\pi_{i}(j,\delta)=s_{i,j}^{\delta_{{\,}_{\,}}} s^{\delta_{i,j}}_{i,j} s^{\ }_{i,j+1} s^{\delta_{i,j+1}}_{i,j+1}\ldots$ denote the suffix of $\pi$ for $j\geq 0$ and $\delta\leq \delta_{i,j}$, s.t. $s^{\delta_{\,}}_{i,j}\xlongrightarrow{\delta_{i,j}-\delta}s^{\delta_{i,j}}_{i,j}$.

  Let $\pi_1\in \Paths_{M_1}(\sini_1)$ be an arbitrarily chosen path. 
  From~\cref{df:model-sim} we construct a matching to $\pi_1$ path $\pi_2\in \Paths_{M_2}(\sini_2)$, s.t. 
  $(s^{\ }_{1,j}, s^{\ }_{2,j}), (s^{\delta_{1,j} }_{1,j}, s^{\delta_{2,j} }_{2,j})\in\mathcal{R}$ for all $j\geq 0$. 
  Since $\mathcal{R}$ is a timed simulation, it follows that $\upsilon_{1,j}=\upsilon_{2,j}$ and $\delta_{1,j}=\delta_{2,j}$ for all $j\geq 0$. 
  From $M_2,\sini_2 \models \Apath \psi_1\Untili\psi_2$, 
  we know that for $\pi_2\in\Paths_{M_2}(\sini_2)$ the following holds: $\exists j\geq 0 \ldotp \exists \delta\leq \delta_{2,j}\ldotp (\textstyle\sum_{k < j}\delta_{2,k} + \delta)\in I$, and $M_2,\pi_2(j,\delta)\models \psi_2$, and $\forall \delta'< \delta \ldotp M_2,\pi_2(j,\delta')\models \psi_1$, and $\forall k< j \ldotp \forall \delta'< \delta_{2,k} \ldotp M_2,\pi_2(k,\delta')\models \psi_1$; consequently, for $\pi_1$ it follows that: 
  $\delta\leq \delta_{1,j}$ and $(\textstyle\sum_{k < j}\delta_{1,k} + \delta)\in I$, and by induction that $M_1,\pi_1(j,\delta)\models \psi_2$, and $\forall \delta'< \delta \ldotp M_1,\pi_1(j,\delta')\models \psi_1$, and $\forall k< j \ldotp \forall \delta'< \delta_{1,k} \ldotp M_1,\pi_1(k,\delta')\models \psi_1$.  
  Since $\pi_1$ was chosen arbitrarily, the same reasoning can be applied to any path in $\Paths_{M_1}(\sini_1)$; hence we have $M_1,\sini_1 \models \Apath \psi_1\Untili\psi_2$.

\end{proof}

Given a timed agent graph $G$, its time-insensitive variant $G\uext$ is a timed agent graph that is an almost identical copy of $G$ except that the set of clocks  $\Clocks$ is set to $\varnothing$.
Essentially, it also means that in $G\uext$, for any $l \in\Loc$ an invariant function is such that $\Inv(l)=\top$, and for any $(l,g,\clc,\varsigma,\alpha,X,l')\in \Edges$, $\clc$ and $X$ are replaced with $\top$ and  $\varnothing$ respectively. 

For a TMAS graph $MG=\multiset{G\ag{1}, \ldots, G\ag{\nMG}}$, a time-insensitive variant  $MG\uext$ is given through the time-insensitive variant of the agent graphs composing it, that is $MG\uext=\multiset{G\ag{1}\uext, \ldots, G\ag{\nMG}\uext}$

\begin{lemma}
  Any evaluation $\eta\in\Eval_\Vars$ reachable at $l\in\Loc$ for an agent graph $G$ must also be reachable at $l$ for $G\uext$.
\end{lemma}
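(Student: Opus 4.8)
The plan is to show that the time-insensitive variant $G\uext$ has at least as many reachable evaluations at each location as $G$, by projecting every initial path of $\mathcal{M}(G)$ onto an initial path of $\mathcal{M}(G\uext)$ that visits the same sequence of locations with the same sequence of evaluations. The natural tool is the simulation machinery just developed: I would exhibit an explicit simulation relation $\mathcal{R}\subseteq \States_{\mathcal{M}(G)} \times \States_{\mathcal{M}(G\uext)}$ and then argue reachability is transported along it. Concretely, set $\mathcal{R} = \{\,((l,\eta,\upsilon),\ (l,\eta,\upsilon')) \mid l\in\Loc,\ \eta\in\Eval_\Vars\,\}$, i.e. relate two states exactly when their location and discrete evaluation agree (the clock components are irrelevant since $G\uext$ has an empty clock set, so $\upsilon'$ is the trivial valuation on $\varnothing$). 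Clearly $(\sini, \sini\uext)\in\mathcal{R}$ since both start in $l_0$ with $\eta_0\in\sset{g_0}$ (and $g_0$ is the same condition in both graphs). For the labelling condition, note $\mathcal{M}(G)$ and $\mathcal{M}(G\uext)$ use the same $AP$ and the labelling of a state depends only on $\eta$ and $l$, so $L(l,\eta,\upsilon) = L\uext(l,\eta,\upsilon')$ whenever $((l,\eta,\upsilon),(l,\eta,\upsilon'))\in\mathcal{R}$.

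The transition-matching condition is where the actual content sits, and it is genuinely one-directional: I would check that every move of $\mathcal{M}(G)$ is mimicked by $\mathcal{M}(G\uext)$. A delay-transition $(l,\eta,\upsilon)\xrightarrow{\delta}(l,\eta,\upsilon+\delta)$ of $G$ is matched trivially: in $G\uext$ every location invariant is $\top$, so the self-delay (vacuously, since there are no clocks) or simply a zero-move keeps us in the $\mathcal{R}$-related state $(l,\eta,\cdot)$. An action-transition $(l,\eta,\upsilon)\xrightarrow{\alpha}(l',\eta',\upsilon[X{=}0])$ of $G$ comes from an edge $l\xhookrightarrow{g,\clc,-,\alpha,X}l'$ with $\eta\in\sset{g}$, $\eta'=\Effect(\alpha,\eta)$, and the clock guard $\upsilon\in\sset{\clc}$ plus invariant conditions satisfied; in $G\uext$ the corresponding edge is $l\xhookrightarrow{g,\top,-,\alpha,\varnothing}l'$, and the only remaining side conditions are $\eta\in\sset{g}$ and $\eta'=\Effect(\alpha,\eta)$ — both inherited from $G$. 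Hence $(l,\eta,\upsilon')\xrightarrow{\alpha}(l',\eta',\upsilon')$ is a valid transition of $\mathcal{M}(G\uext)$ landing in an $\mathcal{R}$-related state. So $\mathcal{R}$ is a simulation.

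Finally I would deduce the reachability claim. If $\eta$ is reachable at $l$ in $G$, there is an initial path $\pi$ of $\mathcal{M}(G)$ with $\pii{i} = (l,\eta,\upsilon)$ for some $i$ and some $\upsilon$. Walking along $\pi$ and repeatedly applying condition (ii)(b) of the simulation, I get a matching initial path $\pi'$ of $\mathcal{M}(G\uext)$ whose $i$-th state is $\mathcal{R}$-related to $\pii{i}$, i.e. has the form $(l,\eta,\cdot)$. One subtlety: the paths of Definition of "path" are required to be progressive (time-divergent), and $\mathcal{M}(G\uext)$ has no clocks, so "time divergence" is vacuous/degenerate there — I would either note that the reachability notion only needs a finite prefix, or observe that the definition of $\Reach$ unfolds to ordinary finite reachability so the progressiveness constraint on the tail is immaterial. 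Modulo that bookkeeping, $\eta$ is reachable at $l$ in $G\uext$, which is exactly the claim. The main obstacle is not any deep argument but rather making the clock-free degenerate case of the path/reachability definitions line up cleanly with the simulation definition, which was phrased for models with clocks.
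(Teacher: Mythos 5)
Your argument is correct and is essentially an explicit rendering of what the paper leaves implicit: the paper's entire proof is the one-liner ``Follows directly from \cref{def:t-model}'', and the transition-by-transition matching you carry out (the guards, invariants and resets of $G\uext$ are weakenings of those of $G$, so every delay- and action-transition of $\mathcal{M}(G)$ is mimicked in $\mathcal{M}(G\uext)$ with the same location and discrete evaluation, and the matched path inherits progressiveness because the delays are preserved) is exactly the content behind that remark. The only cosmetic caveat is that clause (ii-a) of the simulation definition need not hold verbatim, since the labelling function is only constrained to be a \emph{subset} of the admissible propositions and so may differ between the two models on equal $(l,\eta)$; but the lemma concerns reachability only, so you never use (ii-a), and taking $AP=\varnothing$ disposes of the point.
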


\begin{proof}
  Follows directly from~\cref{def:t-model}.
\end{proof}

\begin{corollary}
  \label{over-app-corr}
Given a pair of agent graphs $G_1$ and $G_2$ with their local domain functions $d_1$ and $d_2$, such that $d_i:\Loc_i\mapsto \powerset{\Eval_{\Vars_i}}$ maps every location $l\in\Loc_i$ to the set of its reachable evaluations, i.e. $d_i(l)=\set{\eta\in\Eval_{\Vars_i} \mid (l,\eta,\upsilon)\in\Reach(\mathcal{M}(G_i))}$, for $i=1,2$,
in case $G_2={G_1}\uext$ is a time-insensitive variant of $G_1$, then its local domain $d_2$ is an over-approximation of the local domain $d_1$, that is $d_1(l)\subseteq d_2(l)$ for all $l\in\Loc_1=\Loc_2$.
\end{corollary}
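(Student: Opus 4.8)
The plan is to derive the corollary directly from the preceding Lemma by unfolding the definition of a local domain. First I would fix an arbitrary location $l\in\Loc_1$; since $G_2={G_1}\uext$ is obtained from $G_1$ by only emptying the clock set (and correspondingly trivialising invariants, guards and resets), we have $\Loc_2=\Loc_1$, $\Vars_2=\Vars_1$, $\Act_2=\Act_1$, $\Effect_2=\Effect_1$, and hence $\Eval_{\Vars_2}=\Eval_{\Vars_1}$, so that the inclusion $d_1(l)\subseteq d_2(l)$ is well-typed and the comparison ``the same $\eta$ at the same $l$'' is meaningful.

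Next I would take an arbitrary $\eta\in d_1(l)$. By the definition of the local domain function, this means there exists $\upsilon$ with $(l,\eta,\upsilon)\in\Reach(\mathcal{M}(G_1))$, i.e.\ $\eta$ is reachable at $l$ for $G_1$ in the sense of the reachability definition. Applying the preceding Lemma with $G=G_1$ (so that $G\uext={G_1}\uext=G_2$) yields that $\eta$ is reachable at $l$ for $G_2$ as well, i.e.\ there is some $\upsilon'$ with $(l,\eta,\upsilon')\in\Reach(\mathcal{M}(G_2))$. Hence $\eta\in d_2(l)$. Since $\eta$ and $l$ were arbitrary, $d_1(l)\subseteq d_2(l)$ for all $l\in\Loc_1=\Loc_2$, which is exactly the assertion that $d_2$ over-approximates $d_1$.

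There is essentially no hard step here: the entire content of the corollary is already packaged in the Lemma, whose proof in turn reduces to inspecting \cref{def:t-model} --- every delay- and action-transition of $\mathcal{M}(G_1)$ has a counterpart in $\mathcal{M}({G_1}\uext)$, because dropping the clocks only relaxes the enabling side conditions ($\upsilon,\upsilon+\delta\in\sset{\Inv(l)}$ for delays, $\upsilon\in\sset{\clc}$ and $\upsilon'\in\sset{\Inv(l')}$ for actions), never strengthens them, while leaving the discrete component $(l,\eta)$ of each step untouched. So by a straightforward induction on the length of an initial path one lifts any run of $\mathcal{M}(G_1)$ witnessing reachability of $(l,\eta,\upsilon)$ to a run of $\mathcal{M}({G_1}\uext)$ witnessing reachability of $(l,\eta,\cdot)$; the corollary is then a one-line application of this fact. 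The only thing worth stating explicitly is the bookkeeping identity of $\Loc$, $\Vars$, $\Act$ and $\Effect$ under the $\uext$ operation, as done in the first paragraph.
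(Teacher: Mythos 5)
Your proposal is correct and matches the paper's treatment: the corollary is stated there without a separate proof, as an immediate unfolding of the local-domain definition on top of the preceding Lemma, whose own justification is exactly the observation you make (dropping clocks only weakens the enabling conditions of delay- and action-transitions while leaving the discrete component $(l,\eta)$ unchanged). Nothing is missing.
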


\begin{algorithm}[t]
    \DontPrintSemicolon
    in $G_{MG\uext}$ compute an over-approx. of local domain $d$ for $\Argsd{\mathcal{F}}$\;
    \ForEach{\text{timed agent graph }$G\ag{i}\in MG$}{
        compute an abstract timed agent graph $\mayabstr_{\mathcal{F}}(G\ag{i})$ w.r.t. $d_i=\set{l\ag{i}\mapsto \bigcup_{l\in\Loc\ag{1}\times\ldots\times\set{l\ag{i}}\times\ldots\times\Loc\ag{\nMG}}d(l)}$\;
    }
    \Return $\mayabstr_{\mathcal{F}}(MG)=\multiset{\mayabstr_{\mathcal{F}}(G\ag{1}),\ldots,\mayabstr_{\mathcal{F}}(G\ag{\nMG})}$
    \caption{Abstraction of TMAS graph $MG$ wrt $\mathcal{F}$}
\label{alg:abstraction-idea}
\end{algorithm}

\begin{algorithm}[!t]
	\DontPrintSemicolon
	\SetKw{KwGoTo}{go to}
	\SetKwFor{While}{while}{}{end while}%
	\nonl\myproc{\overApproxLocalDomain{$G=G_{MG}, W=\Argsd(f)$}}{
		\ForEach{ $l \in \Loc$}{
			\label{alg:approx-ld-init1}
			$l.d := \varnothing$\; %
			$l.p:=\varnothing$\; %
			$l.\mathit{color}:=\textit{white}$\;
		}
		$l_0.d := \set{\eta(W) \mid \eta\in \sset{g_0}}$\; %
		\label{alg:approx-ld-init2}
		$Q:=\varnothing$\;%
		\label{alg:approx-ld-q1}
		$\enqueue(Q,l_0)$\;
		\label{alg:approx-ld-q2}
		\label{alg:approx-ld-mainloop1}
		\While{$Q\neq \varnothing$}{
			$l:=\extractMax(Q)$\;%
			$\visitLoc(l,W)$\;
			\label{alg:approx-ld-enqueue1}
			\If{$l.\mathit{color}\neq\textit{black}$}{\tikzmark{top1}
				\ForEach{$l'\in\mathit{Succ}(l)$}{
					$Q:=\enqueue(Q,l')$\;
					$l'.p:=l'.p\cup\set{l}$\tikzmark{bottom1}\; 
				}
				$l.\mathit{color}:=\textit{black}$\;
				\label{alg:approx-ld-enqueue2}
			}
			\label{alg:approx-ld-mainloop2}
		}
		\KwRet{$\set{l\mapsto \eta \mid l\in\Loc, \eta\in\Eval_{\Vars}, \eta(W)\in d.l}$}\;
		\label{alg:approx-ld-return}
	}
	\smallskip
	\nonl\myproc{\visitLoc{$l, W$}}{
		$\kappa:=l.d$\tikzmark{top2}\;
		\label{alg:approx-ld-procincedges1}
		\ForEach{$l'\in l.p,\; l'\smash{\xhookrightarrow{g,\clc,-,\alpha,X}}l$}{ %
			$l.d:=l.d \cup \procEdgeLabels(l',g,\clc,-,\alpha,X,l,W)$
			\label{alg:approx-ld-aux1}
		}
		$l.p=\varnothing$\;
		\label{alg:approx-ld-pireset}
		\If{$\kappa \neq l.d$}{
			$l.\mathit{color}:=\textit{grey}$\tikzmark{bottom2}\;
			\label{alg:approx-ld-procincedges2}
		}
		$\lambda:=l.d$\tikzmark{top3}\;\label{proc-self-loops}\label{alg:approx-ld-procloops1}
		\ForEach{$l\smash{\xhookrightarrow{g,\clc,-,\alpha,X}} l$}{
			$l.d:=l.d\cup \procEdgeLabels(l,g,\clc,-,\alpha,X,l,W)$\quad\tikzmark{right1}
			\label{alg:approx-ld-aux2}
		}
		
		\If{$\lambda\neq l.d$}{
			$l.\mathit{color}:=\textit{grey}$\;
			\KwGoTo \ref*{proc-self-loops}
			\label{alg:approx-ld-procloops2}\tikzmark{bottom3}
		}
	}
	
	\smallskip
	\nonl\myproc{\procEdgeLabels{$l,g,\clc,-,\alpha,X,l',W$}}{
		$H_{\textit{pre}} :=  \set{\eta\in \sset{g} \mid \eta(W)\in l.d}$\;
    $H_{\textit{post}}:=\set{ \Effect(\alpha,\eta) \mid \eta\in H_{\textit{pre}}}$\;
		\KwRet{$\set{\eta(W) \mid \eta\in H_{\textit{post}}}$}\AddNoteSmall{top1}{bottom1}{right1}{enqueue immediate-neighbours}\AddNoteSmall{top2}{bottom2}{right1}{process incoming edges}\AddNoteSmall{top3}{bottom3}{right1}{process self-loops}
	}	
	\caption{\mbox{An over-approx. of the local domain for $W\subseteq\Vars$}}
	\label{alg:variable-approx}
\end{algorithm}

 \begin{algorithm}[!t]
  \DontPrintSemicolon
  \caption{Abstraction procedure}
  \label{alg:general-abstraction}
  \nonl\myproc{\generalVarAbstraction{$G$, $d$, $f$, $\Scope$}}{
    $Z_0:=(f(\restr{\sset{g_0}}{W}))(Z)$\label{alg:gen-abstr-l0}\;
    $g_0:=g_0\wedge(Z{=}Z_0)$\label{alg:gen-abstr-l1}\;
    $\eta_0\in \sset{g_0}$\;
    $\Edges_a:=\varnothing$\;
    \ForEach{$e := \smash{l\xhookrightarrow{g,\clc,\varsigma,\alpha,X}l'}$}{
    \label{alg:gen-abstr-main-for1}
      \eIf{$\set{l,l'}\cap\Scope=\varnothing$\tikzmark{t1}}{
        \label{alg:gen-abstr-inner-if1}
        $\Edges_a := \Edges_a \cup \set{e}$\tikzmark{b1}\;
      }{
        \label{alg:gen-abstr-inner-else1}      
        \ForEach{$\eta\in d(l)$}{
          \label{alg:gen-abstr-inner-for1}
          $\alpha' := \alpha$\hspace{4.7em}\tikzmark{r1}\;
          \If{$l\in \Scope$}{\label{alg:gen-abstr-pre-if1}
            $\alpha' := (W:=\eta(W)).\alpha'$\label{alg:gen-abstr-pre-if2}\; %
            $\alpha' := (Z:=(\eta_0(Z))).\alpha'$  %
            \label{alg:gen-abstr-pre-if3}
          }
          \If{$l'\in \Scope$}{\label{alg:gen-abstr-post-if1}
            $\alpha' := \alpha'.(Z:=(f(\restr{\eta}{W}))(Z))$\label{alg:gen-abstr-post-if2}\; %
            $\alpha' := \alpha'.(W:=\eta_0(W))$  %
            \label{alg:gen-abstr-post-if3}
          }
          $g':=g[W=\eta(W)]$\;
          $\Edges_a := \Edges_a \cup \set{(l,g',\clc,\varsigma,\alpha',X,l')}$
          \label{alg:gen-abstr-main-for2}
        }
      }
    }
    $\Edges:=\Edges_a$\;
    $\Vars:=\Vars\cup Z$\;
    \KwRet{$G$}\AddNoteSmall{t1}{b1}{r1}{\texttt{out-of-scope edge}}
  }
\end{algorithm}%

\section{Variable Abstraction for Timed MAS}
\label{sec:abstraction}

This section presents the abstraction method that is defined for the modular representation as TMAS graph and intended to reduce the state space of the induced abstract model.

\subsection{Definition}

As in~\cite{Jamroga23easyabstract}, the abstraction process is composed of two subroutines:
computation of the local domain approximation followed by abstract model generation.
\footnote{Essentially, the former subroutine might be skipped, when the required approximation is provided by the user.}
The intuition behind this is informally presented in~\cref{alg:abstraction-idea}.

Formally, the abstraction $\mathcal{A}^\mathit{may}_\mathcal{F}$ for TMAS graph $MG$ is specified by the set of pairs $\mathcal{F}=\set{(f_1, \Scope_1),\ldots,(f_{n_{\mathcal{F}}},\Scope_{n_{\mathcal{F}}})}$, where $f_i: \Eval_{W_i}\mapsto \Eval_{Z_i}$ is an abstract mapping function, such that $W_i\subseteq \Vars$, $Z_i\cap \Vars = \varnothing$ and $i\neq j \Rightarrow Z_i\cap Z_j=\varnothing$, and $\Scope_i\subseteq \Loc$ is the effective scope, for any $1\leq i, j\leq n_{\mathcal{F}}$.
Intuitively, the scope enables applying a finer-grain abstractions, only for the certain fragment of the system.
We denote the domain and range of mapping function $f_i$ by $\Argsd(f_i)=W_i$ and $\Argsr(f_i)=Z_i$.
For simplicity, in the sequel we restrict the presentation to the case of singleton $\mathcal{F}=\set{(f,\Scope)}$ --- the changes needed for the general case are merely technical and shall become apparent afterwards.

Intuitively, the abstraction obtains significant improvements under the following conditions: the verified property is *independent* from the variables being removed (so that the abstraction is conclusive), and the removed variables are largely *independent* from those being kept (so that it significantly reduces the state space). 
The *independence* is of semantic nature, and we see no easy way to automatically select such variables. At this stage, it seems best to follow a domain expert advice. 

As follows from the discussion in the extended version of~\cite{Jamroga23easyabstract},
lower-approximation of local domain is usually of little use in practice, as it can only map a location with a singleton or an empty set.
Therefore, here we focus only on the may-abstraction procedure for the TMAS graph
that is based on the over-approximation of local domain. 

\para{Local domain approximation.}
The $\overApproxLocalDomain$ from \cref{alg:variable-approx} on the input takes a timed agent $G$ (usually being a combined TMAS graph) and a subset of its variables $W\subseteq\Vars$, and then traverses the locations of $G$ in a priority-BFS manner computing for each location the set of its possibly reachable evaluations of $W$.
With each visit to a location, an algorithm attempts to refine the approximation,
until stability (in terms of set-inclusion) is obtained.
Starting from the initial one, every location must be visited at least once, and shall be re-visited again whenever any of its predecessors gets their approximation updated.
As the number of locations and edges, and cardinality of the variables domains are all finite, the procedure is guaranteed to terminate.

Let $\dmax:=\textrm{max}\set{|\Dom(v)| \mid v\in W}$, $k=|\Vars|$, $r=|W|$, $n=|\Loc|$, $m=|\Edges|$.
The initialization loop on lines 1--4 takes $O(n)$ and line 5 $O(r)$. 
With generic heap implementation of priority queue, lines 6 and 7 run in $\Theta(\log{n})$ and $\Theta(1)$ respectively. 
The loop on lines 8--15 repeats at most $\dmax^r$ times for each of $n$ locations until stable approximation is obtained. 
Line 9 is in $O(\log{n})$, lines 11--15 in $O(n^2)$. 
Upon visiting all $n$ locations $\dmax^r$ times, $\visitLoc$ calls to $\procEdgeLabels$ at most $m\dmax^r$ times.
Hence, given the set-union is in $O(\dmax^r)$ and $\Effect$ can be computed in $O(1)$, the running time of the whole $\overApproxLocalDomain$ is $O(m\dmax^r(k+\dmax^r) + n^3\dmax^r)$. 
It needs $O(n)$ space for priority queue, $O(m\dmax^r)$ for auxiliary look-up tables (e.g., for the pre-computation of $\sset{g}$) and $O(n\dmax^r + n^2)$ for storing locations with their attributes.
Hence, $\overApproxLocalDomain$ is in $O(\dmax^r(n+m)+n^2)$ space.

\para{Abstraction generation.} 
The abstraction computation procedure is described in \cref{alg:general-abstraction}. 
In the main loop (lines~\ref{alg:gen-abstr-main-for1}--\ref{alg:gen-abstr-main-for2}) the edges of timed agent graph $G$ are transformed according to $\mathcal{F}=\set{(f,\Scope)}$ as follows: 
\begin{itemize}
\item the edges entering or within $\Scope$ have their actions appended with (1) update of the target variable $Z$ and (2) update which sets the values of the source variables $W$ to their defaults (resetting those),
\item the edges leaving or within $\Scope$ have actions prepended with (1) update of source variables $W$ (a temporarily one to be assumed  for the original action) and (2) update which resets the values of the target variable $Z$.
\end{itemize}

Note that due to introduction of a scope, the variables in $X$ cannot be genuinely removed for a proper subset of locations --- 
instead, their evaluation are fixed to some constant value within the states, where the corresponding location label falls under the scope.

The lines 1--4 run in $O(r)$, the outer loop on lines 5--17 repeats exactly $m$ times, the ``if-else'' condition check involves set operation and runs in $O(n)$. 
For the worst-case analysis we shall assume that it always proceeds with ``else'' block. 
The inner loop on the lines 8--17 repeats at most $\dmax^r$ times. 
Assuming that operations on lines 11--12, 14--15 are in $O(1)$ time,\footnote{In general, having a complex or lengthy edge labels is considered a bad practice, which significantly degrades the model readability; realistically, the lengths of edge labels can be assumed to be relatively small.} %
we can conclude that $\generalVarAbstraction$ runs in $O(r+m(n + m\dmax^r))$. 
It requires at most $O(m\dmax^r)$ for storing new edges, and $O(n)$ for $\mathcal{F}$\footnote{The latter would be $O(nn_{\mathcal{F}})$ when $\mathcal{F}$ is not a singleton.}. 
Hence, space complexity of $\generalVarAbstraction$ is in $O(m\dmax^r+n)$.

\subsection{Correctness}

\begin{theorem}
  Let $M_1=\mathcal{M}(MG)$, $M_2=\mathcal{M}(\abstr^\may_\mathcal{F}(MG))$, where $\Argsd(\mathcal{F})=W\subseteq\Vars$, $\Argsr(\mathcal{F})=Z$, $Z\cap \Vars=\varnothing$. %
  Then, for any $V\subseteq (\Vars\setminus W)$ and $AP\subseteq \Cond_{V}\cup\Loc$, the relation $\mathcal{R}\subseteq \States_1\times\States_2$, defined by $(l_1,\eta_1,\upsilon_1)\mathcal{R} (l_2,\eta_2,\upsilon_2)$ iff $l_1=l_2$, $\upsilon_1=\upsilon_2$ and either $(l_1\in\Scope) \wedge (\eta_1(V)=\eta_2(V))$, or $\eta_1(\Vars)=\eta_2(\Vars)$,
  is the timed simulation over $AP$ between $M_1$ and $M_2$.%
\end{theorem}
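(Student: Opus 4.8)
The plan is to verify that the explicitly given relation $\mathcal{R}$ satisfies conditions (i), (ii-a), (ii-b) of \cref{df:model-sim}, together with the timed condition $\upsilon_1=\upsilon_2$. The last clause is immediate from the definition of $\mathcal{R}$, so the real work is the three simulation conditions. Condition (i) amounts to checking that $\sini_1 \mathcal{R}\, \sini_2$: both initial states share location $l_0$ and the zero clock valuation, and the abstraction procedure (\cref{alg:general-abstraction}, lines~\ref{alg:gen-abstr-l0}--\ref{alg:gen-abstr-l1}) only augments $g_0$ with a constraint fixing the fresh variable $Z$, so the $\Vars$-part of the initial evaluation is unchanged; hence whichever disjunct of $\mathcal{R}$ applies, it holds at the initial state. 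Condition (ii-a) follows because $AP\subseteq \Cond_V\cup\Loc$ with $V\subseteq\Vars\setminus W$: if $l_1\in\Scope$ then $\eta_1(V)=\eta_2(V)$ gives agreement on all $\Cond_V$ propositions, and if $l_1\notin\Scope$ then $\eta_1(\Vars)=\eta_2(\Vars)$ gives it a fortiori; location propositions agree since $l_1=l_2$.

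The core of the argument is the zig-zag condition (ii-b), and it splits according to whether the $M_1$-transition is a delay-transition or an action-transition, and further according to the scope status of the locations involved. For delay-transitions $(l_1,\eta_1,\upsilon_1)\xrightarrow{\delta}(l_1,\eta_1,\upsilon_1+\delta)$: since the abstraction does not touch clocks, invariants, or location structure, and since $\upsilon_1=\upsilon_2$, the same delay $\delta$ is enabled from $(l_2,\eta_2,\upsilon_2)$, and the resulting pair still lies in $\mathcal{R}$ (same location, same updated valuation $\upsilon_1+\delta=\upsilon_2+\delta$, and the $\eta$-components are unchanged so the relevant disjunct persists). For action-transitions, one takes the edge $l_1\xhookrightarrow{g,\clc,-,\alpha,X}l_1'$ witnessing the $M_1$-step and exhibits the corresponding abstract edge produced by \cref{alg:general-abstraction}. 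If $\{l_1,l_1'\}\cap\Scope=\varnothing$ the edge is copied verbatim, and since on $\mathcal{R}$-related states outside the scope we have $\eta_1(\Vars)=\eta_2(\Vars)$, the guard $g$ is satisfied at $\eta_2$, the clock constraint $\clc$ and invariants are identical, and the successors again satisfy $\eta_1'(\Vars)=\eta_2'(\Vars)$. If the edge touches the scope, one picks the abstract edge instantiated at $\eta := \eta_2$ restricted appropriately (using $\eta_2(W)\in d(l_1)$, which holds because $d$ over-approximates the reachable local domain by \cref{over-app-corr} and the definition of the per-agent $d_i$ in \cref{alg:abstraction-idea}); its guard $g[W=\eta(W)]$ is satisfied, and the prepended/appended updates on $W$ and $Z$ (i) temporarily restore the $W$-values so $\alpha$ behaves as in $M_1$ on the $V$-part, and (ii) reset $W$ to defaults and set $Z$ according to $f$, so that the successor state has the correct $V$-values and location, landing back in the in-scope disjunct of $\mathcal{R}$. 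The bookkeeping to watch is the transition between the two disjuncts: an edge entering the scope from outside, or leaving it, must be checked to re-establish the appropriate disjunct, which is exactly what the symmetric entering/leaving update blocks in \cref{alg:general-abstraction} are designed to ensure.

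The main obstacle I anticipate is precisely this case analysis around the scope boundary — making sure that for every combination ($l_1$ in/out of $\Scope$, $l_1'$ in/out of $\Scope$) the chosen abstract edge's prepended and appended updates compose correctly with $\alpha$ so that the $V$-projection of the successor matches $\Effect(\alpha,\eta_1)$ restricted to $V$, while the removed variables $W$ are consistently driven to their default values and $Z$ is consistent with $f(\restr{\eta_1}{W})$. Here one must use that $V\cap W=\varnothing$, so $\Effect(\alpha,\cdot)$'s action on $V$ does not depend on the (temporarily substituted) $W$-values beyond what the original edge saw; this is where the semantic assumption implicit in the construction does its work, and it should be isolated as the key sub-lemma. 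The remaining Boolean and path-formula cases of the induction are not needed here, since \cref{thm:tactl-pres} already reduces TACTL* preservation to the existence of a timed simulation, which is all this theorem asserts.
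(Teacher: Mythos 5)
Your overall strategy --- verifying conditions (i), (ii-a) and (ii-b) of \cref{df:model-sim} directly, splitting (ii-b) into delay- and action-transitions, and matching each concrete edge with one of the abstract edges generated by \cref{alg:general-abstraction} while invoking \cref{over-app-corr} for the local-domain membership --- is exactly the paper's proof, and the treatment of (i), (ii-a) and the delay case is fine.

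There is, however, one concrete slip in the action case: you propose to pick the abstract edge ``instantiated at $\eta := \eta_2$'' and justify its existence by ``$\eta_2(W)\in d(l_1)$, which holds because $d$ over-approximates the reachable local domain.'' This is the wrong witness and the wrong justification. When $l_1\in\Scope$, the abstract state's evaluation $\eta_2$ has its $W$-components pinned to the default $\eta_0(W)$, so the edge indexed by $\eta_2(W)$ carries the guard $g[W=\eta_0(W)]$ and the pre-update $W:=\eta_0(W)$, which in general is neither enabled at $\eta_2$ nor reproduces $\Effect(\alpha,\eta_1)$ on the $V$-part. Moreover, \cref{over-app-corr} over-approximates the set of evaluations reachable in the \emph{concrete} model, so it gives $\eta_1\in d(l_1)$ and says nothing about $\eta_2$. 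The correct choice (and the paper's) is the edge $e_a(\eta_1(W))$ indexed by the concrete value $\eta_1(W)$: it belongs to the generated edge set because $\eta_1\in d(l_1)$ by \cref{over-app-corr}; its guard $g[W=\eta_1(W)]$ is satisfied by $\eta_2$ since $\eta_1\in\sset{g}$ and the related evaluations agree on $V$; and its pre-update restores exactly the $W$-values that the concrete step saw, so the composed effect yields $\eta_1'(V)=\eta_2'(V)$. Your later remark that the updates ``temporarily restore the $W$-values'' indicates you have the right mechanism in mind, but as written the witness selection and its justification do not go through and should be repaired as above.
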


\begin{proof}
    First, we show that condition (i) from \cref{df:model-sim} holds.
    By construction of $\abstr^\may_\mathcal{F}(MG)$ from~\cref{alg:general-abstraction}, an abstract initial condition is of the form $g_0\wedge g_{Z_0}$ (cf.~\cref{alg:gen-abstr-l1}), where $g_{Z_0}$ determines for $Z$ its initial value $Z_0:=f(\restr{\sset{g_0}}{W})(Z)$, that is $g_{Z_0}\equiv (Z{=}Z_0)$ (cf.~\cref{alg:gen-abstr-l0}).
    Let $\sini_i=(l_0, \eta_i, \upsilon_i)$, where $\forall_{0\leq j \leq n_{\Clocks}}\upsilon_i(x_j)=0$ for $i=1,2$, and $\eta_1\in \sset{g_0}$, $\eta_2\in\sset{g_0\wedge g_{Z_0}}$.
    Observe that $\eta_2\in\sset{g_0}$ also holds, that is $\eta_1(\Vars)=\eta_2(\Vars)$ (and therefore also $\eta_1(V)=\eta_2(V)$), and thus $(\sini_1,\sini_2)\in\mathcal{R}$.

    Next, we show that condition (ii) from \cref{df:model-sim} holds as well. 
    Note that (ii-a) trivially holds due to the considered propositions in $AP$ ranging over $V\subseteq(\Vars\setminus W)$ only, and the fact that $\mathcal{R}$ implies that related states agree on the evaluations for $V$.
    Therefore, for any $(s_1,s_2)\in\mathcal{R}$ we have that $L_1(s_1)\cap AP = L_2(s_1)\cap AP$, and so only (ii-b) remains to be shown. 

    Let $s_i,s'_i\in\States_i$, $s_i=(l_i,\eta_i,\upsilon_i)$, $s'_i=(l'_i,\eta'_i,\upsilon'_i)$ for $i=1,2$, $(s_1,s_2)\in\mathcal{R}$ and $s_1\longrightarrow_1 s'_1$.
    From \cref{def:t-model}, this must be either a delay-transition or an action transition.

    In the former case, $s_1\xlongrightarrow{{\delta}}_1s'_1$ for some $\delta\in\Rplus$, s.t. (by~\cref{def:t-model}) $l_1=l'_1$, $\eta_1=\eta'_1$, $\upsilon'_1=\upsilon_1+\delta$, and  $\upsilon_1,\upsilon_1+\delta\in \sset{\Inv_1(l_1)}$. 
    From $(s_1,s_2)\in\mathcal{R}$ we know that $l_1=l_2$ and $\upsilon_1=\upsilon_2$.
    From~\cref{alg:general-abstraction} it follows that $\Inv_1=\Inv_2$, implying that   
    $\upsilon_2,\upsilon_2+\delta \in\sset{\Inv_2(l_1)}$, so there must exist a (corresponding) delay-transition $s_2\xlongrightarrow{{\delta}}_2s'_2$, where $\upsilon'_2=\upsilon_2 +\delta$, $l_2=l'_2$, $\eta_2=\eta'_2$.  Hence, we have $(s_2,s'_2)\in\mathcal{R}$.

    In the latter case, $s_1\xlongrightarrow{\alpha_1}_1 s'_1$ for some $e_1=l_1\xhookrightarrow{g_1,\clc_1,-,\alpha_1,X_1}_1l'_1$, s.t. $\eta_1\in\sset{g_1}$, $\upsilon_1\in\sset{\clc_1}$,  $\eta'_1=\Effect(\alpha_1,\eta_1)$, $\upsilon'_1\in\sset{\Inv(l'_1)}$ and $\upsilon'_1=\upsilon_1[X_1=0]$.
    By~\cref{alg:general-abstraction}, each (concrete) labelled edge $e=(l,g,\clc,\varsigma,\alpha,X,l')\in E$ (cf.~\cref{alg:gen-abstr-main-for1}) is associated with the set of matching (abstract) labelled edges 
    $\set{e_a(\eta(W)) \mid \eta \in d(l)}\subseteq\Edges_a$ (cf.~\cref{alg:gen-abstr-inner-for1}) of the form $e_a(k)=(l,g[W=k],\clc,\varsigma,\alpha^\mathit{post}(k)\updcomp\alpha\updcomp\alpha^\mathit{pre}(k),X,l')$ 
    (cf.~\cref{alg:gen-abstr-main-for2}), %
    such that:
    \begin{itemize}
      \item $\alpha^\mathit{pre}(k)$ corresponds to the consecutive assignment statements $Z:=Z_0$ and $W:=k$ if $l\in \Scope$ (cf.~\cref{alg:gen-abstr-pre-if1,alg:gen-abstr-pre-if2,alg:gen-abstr-pre-if3}), and to $\tau$ (``do nothing'' instruction) otherwise, 
      \item $\alpha^\mathit{post}(k)$ corresponds to the consecutive assignment statements $Z:=(f(\restr{\eta}{W}))(Z)$ and $W:=\eta_0(W)$ for $\eta_0\in\sset{g_0}$, if $l'\in \Scope$ (cf.~\cref{alg:gen-abstr-post-if1,alg:gen-abstr-post-if2,alg:gen-abstr-post-if3}), and to $\tau$ otherwise.
    \end{itemize} 
    Since $(s_1,s_2)\in\mathcal{R}$, $\upsilon_1\in\sset{\clc}$, and $\upsilon'_1\in\sset{\Inv_1(l'_1)}$, it follows that $\upsilon_2\in\sset{\clc}$ and  $\upsilon_2[X_1=0]\in\sset{\Inv_2(l'_1)}$. 
    Furthermore, there must exist an action-transition $s_2\xlongrightarrow{\alpha_2}_2
    s'_2$ induced by a labelled edge $e_2=e_a(\eta_1(W))=(l_2,g_1
    [W=\eta_1(W)],\clc,-,\alpha_1^\mathit{post}(\eta_1(W))\updcomp\alpha_1\updcomp\alpha_1^\mathit{pre}(\eta_1(W)),X_1,l'_2)$ matching to $e_1$, where $l_1=l_2$, $l'_1=l'_2$, 
    (by~\cref{over-app-corr} it must be that $\eta_1\in d(l_1)$).
    Since $\eta_1\in\sset{g}$ and $\eta_1(V)=\eta_2(V)$, it follows that $\eta_2\in\sset{g[W=\eta_1(W)]}$. Moreover, given that $\Effect(\alpha^\mathit{post}(\eta_1(W))\updcomp\alpha\updcomp\alpha^\mathit{pre}(\eta_1(W)),\eta_2)=\eta'_2$, we have $\restr{\eta'_2}{\Vars}[W=\eta'_1(W)]={\Effect(\alpha,\restr{\eta_2}{\Vars}[W=\eta_1(W)])}$, or, in other words, $\eta'_1(V)=\eta'_2(V)$.
    Hence, we conclude that $(s_2,s'_2)\in\mathcal{R}$.
\end{proof}

\newcommand{\tablescale}{1.6\columnwidth}

{\begin{table}[!t]
  \setlength{\abovecaptionskip}{-1.7ex}
  \setlength{\belowcaptionskip}{3pt}
  \centering
  \resizebox{\columnwidth}{!}{%
  \setlength{\tabcolsep}{3pt}
\begin{tabular}{@{}|c|c||rrr|rrr|rrr|rrr|@{}}
    \hline
    \multirow{2}{*}{{\small\textbf{V}}} & \multirow{2}{*}{{\small\textbf{C}}} & \multicolumn{3}{c|}{Concrete} & \multicolumn{3}{c|}{A1} & \multicolumn{3}{c|}{A2} & \multicolumn{3}{c|}{A3}\\ \hhline{|~|~||---|---|---|---|}
    && \multicolumn{1}{c|}{\textbf{\#St}} & \multicolumn{1}{c|}{\textbf{M}} & \multicolumn{1}{c|}{\textbf{t}} & \multicolumn{1}{c|}{\textbf{\#St}} & \multicolumn{1}{c|}{\textbf{M}} & \multicolumn{1}{c|}{\textbf{t}}& \multicolumn{1}{c|}{\textbf{\#St}} & \multicolumn{1}{c|}{\textbf{M}} & \multicolumn{1}{c|}{\textbf{t}} & \multicolumn{1}{c|}{\textbf{\#St}} & \multicolumn{1}{c|}{\textbf{M}} & \multicolumn{1}{c|}{\textbf{t}} \\ \hline
    1 & 1 & 1.1e+2 & 9 & 0 & 1.1e+2 & 9 & 0 & 3.9e+1 & 9 & 0 & 3.9e+1 & 9 & 0 \\
    1 & 2 & 1.4e+2 & 9 & 0 & 1.4e+2 & 9 & 0 & 4.9e+1 & 9 & 0 & 4.9e+1 & 9 & 0 \\
    1 & 3 & 1.7e+2 & 9 & 0 & 1.7e+2 & 9 & 0 & 5.9e+1 & 9 & 0 & 5.9e+1 & 9 & 0 \\
    2 & 1 & 5.5e+3 & 9 & 0 & 5.5e+3 & 9 & 0 & 7.4e+2 & 9 & 0 & 4.7e+2 & 9 & 0 \\
    2 & 2 & 9.0e+3 & 9 & 0 & 9.0e+3 & 9 & 0 & 1.2e+3 & 9 & 0 & 7.8e+2 & 9 & 0 \\
    2 & 3 & 1.3e+4 & 10 & 0 & 1.3e+4 & 10 & 0 & 1.7e+3 & 9 & 0 & 1.2e+3 & 10 & 0 \\
    3 & 1 & 2.7e+5 & 33 & 1 & 2.7e+5 & 34 & 1 & 1.4e+4 & 10 & 0 & 5.3e+3 & 10 & 0 \\
    3 & 2 & 5.8e+5 & 61 & 2 & 5.8e+5 & 61 & 2 & 2.7e+4 & 11 & 0 & 1.2e+4 & 11 & 0 \\
    3 & 3 & 1.0e+6 & 100 & 3 & 1.0e+6 & 100 & 3 & 4.7e+4 & 13 & 0 & 2.2e+4 & 12 & 0 \\
    4 & 1 & 1.3e+7 & 1\,182 & 59 & 1.3e+7 & 1\,182 & 59 & 2.5e+5 & 30 & 1 & 5.8e+4 & 15 & 0 \\
    4 & 2 & 3.6e+7 & 3\,247 & 163 & 3.6e+7 & 3\,248 & 162 & 6.1e+5 & 63 & 3 & 1.7e+5 & 26 & 2 \\
    4 & 3 & 7.9e+7 & 7\,113 & 369 & 7.9e+7 & 7\,113 & 369 & 1.3e+6 & 122 & 6 & 4.0e+5 & 45 & 5 \\
    5 & 1 & \multicolumn{3}{c|}{\memout} & \multicolumn{3}{c|}{\memout} & 4.4e+6 & 396 & 24 & 6.2e+5 & 64 & 5 \\
    5 & 2 & \multicolumn{3}{c|}{\memout} & \multicolumn{3}{c|}{\memout} & 1.4e+7 & 1\,187 & 78 & 2.4e+6 & 225 & 31 \\
    5 & 3 & \multicolumn{3}{c|}{\memout} & \multicolumn{3}{c|}{\memout} & 3.5e+7 & 3\,079 & 208 & 7.1e+6 & 621 & 118 \\
    6 & 1 & \multicolumn{3}{c|}{\memout} & \multicolumn{3}{c|}{\memout} & 7.7e+7 & 6\,738 & 541 & 6.3e+6 & 554 & 67 \\
    6 & 2 & \multicolumn{3}{c|}{\memout} & \multicolumn{3}{c|}{\memout} & 3.0e+8 & 26\,554 & 2\,306 & 3.4e+7 & 3\,002 & 523 \\
    6 & 3 & \multicolumn{3}{c|}{\memout} & \multicolumn{3}{c|}{\memout} & \multicolumn{3}{c|}{\memout} & 1.2e+8 & 10\,528 & 2\,542 \\
    7 & 1 & \multicolumn{3}{c|}{\memout} & \multicolumn{3}{c|}{\memout} & \multicolumn{3}{c|}{\memout} & 6.4e+7 & 5\,408 & 821 \\
    7 & 2 & \multicolumn{3}{c|}{\memout} & \multicolumn{3}{c|}{\memout} & \multicolumn{3}{c|}{\memout} & \multicolumn{3}{c|}{\memout} \\
    \hline
\end{tabular}
}
  \caption{Experimental results for model checking $\varphi_1$ (FAA) on concrete and abstract models with no re-voting}
  \label{tab:results-tasv1}
\end{table}}

\section{Experimental Evaluation}
\label{sec:experiments}

In this section we report the series of experiments on a voting system case study.

\subsection{Benchmark: Estonian Internet Voting}

We extend the Continuous-time Asynchronous Multi-Agent System (CAMAS) model of the voting scenario from~\cite{Arias23stratTCTL}, which was inspired by the election procedure in Estonia~\cite{springall2014security}.
In particular, we add a malicious agent (Coercer), extra variables for the Election Authority (voting frequency and tallying), extra locations and labelled transitions for the Voters (interaction with coercer and possible re-voting).
Furthermore, we parameterize the system with the number of voters (\texttt{NV}), the number of candidates (\texttt{NC}), the Boolean specifying whether re-voting is allowed (\texttt{RV}), the type of coercer's behaviour (\texttt{CTYPE}) and punishment criteria (\texttt{OBEY,DISOBEY}).

We use Uppaal model checker~\cite{uppaal2002software} and verify various configurations of the system --- determined by its parameter values --- with regard to the exposure of voters to the coercion through forced abstention or forced participation~\cite{jamroga2024you}.

The voting scenario is standard: each voter (V) first needs to register for the preferred
voting modality: postal vote, e-vote over the internet or traditional paper vote at
a polling station; if time constraints for the chosen modality are met, the election authority (EA) accepts the registration and immediately provides appropriate voting material to that voter (e.g., election package, e-voting credentials, address of the assigned election commission office).
Upon receiving these materials, V proceeds with either casting the vote for selected
candidate, casting an invalid vote (e.g., by crossing more than one candidate) or
abstaining from voting; if time constraints for casting the vote over the given medium are met, then EA records the vote in the tally.
Then, V can interact with the coercer (C) once, and either get punished
for not complying with the instructions or not (and possibly get rewarded).
Finally, when the election time is over, EA closes the vote and C punishes all the V who did not show how they voted beforehand.

As in~\cite{Arias23stratTCTL}, we assign each voting modality a specific time frame: 1--7 for postal vote, 6--9 for e-vote, 10--11 for paper vote, and close the election at 11 time units.

The global (shared) variables:
\begin{itemize}
  \item $\texttt{sh,prev}$: auxiliary variables used to pass the current and previously cast value of $\texttt{vote}$ from V to EA,
\end{itemize}

The Election Authority timed agent graph's local variables:
\begin{itemize}
  \item $\texttt{tally}$: an array of size $\texttt{NC+1}$ storing the number of votes cast per candidate,\footnote{A greater size was used for technical reasons and to improve the readability; note that in practice $\texttt{tally[0]=0}$ is the global invariant.}
  \item $\texttt{freq}$: voting frequency as the number of voters who
\end{itemize}

The Voter timed agent graph's local variables:
\begin{itemize}
  \item $\texttt{mode}$: (currently) chosen voting modality,
  \item $\texttt{vote}$: encoding of cast vote corresponding to the candidate ($\texttt{1..NC}$), invalid vote ($\texttt{0}$) or no previously registered participation ($\texttt{-1}$).
  \item $\texttt{p}$: Boolean variable whether V was punished,
  \item $\texttt{np}$: Boolean variable whether V was not punished,%
\footnote{Indeed, the pair of variables $\texttt{p}$ and $\texttt{np}$ is almost dual, however having both allows to distinguish the (initial) case when C has not yet decided whether to punish V or not.} %
\end{itemize}
In our model we distinguish two types of coercers with deterministic punishment/reward condition based on the shown receipt (here, any proof of how/whether V voted).
The \texttt{TYPE1} coercer will punish a voter only when the \texttt{DISOBEY} condition
matches the shown receipt (or voter refused to show it), whereas \texttt{TYPE2} will
always punish voter except when the \texttt{OBEY} condition matches the receipt.

In particular, the forced abstention attack (FAA) is captured by \texttt{TYPE2} coercer with \texttt{OBEY=-1} (where \texttt{-1} represents that voter did not cast her vote and thus was not counted towards voting frequency)%
; similarly, the forced participation attack (FPA) is captured by \texttt{TYPE1} coercer with \texttt{DISOBEY=-1}.

The FAA and FPA properties can be represented as follows:
\begin{equation}\leqnomode
  \tag{$\varphi_1$}
  \Apath\Always\left(\, V.np=\top \quad\Rightarrow\quad V.voted=\texttt{OBEY}\,\right)
\end{equation}%
\begin{equation}\leqnomode
  \tag{$\varphi_2$}
  \Apath\Always\left(\, V.np=\top \quad\Rightarrow\quad V.voted\neq \texttt{DISOBEY}\,\right)
\end{equation}%
$\varphi_1$ says that in all executions, V can avoid punishment only by abstaining from the voting (as instructed by C). $\varphi_2$ says that in all executions, V must take part in the voting to avoid getting punished.

{\begin{table}[!t]
  \setlength{\abovecaptionskip}{-1.7ex} 
  \setlength{\belowcaptionskip}{3pt}
  \centering
  \resizebox{\columnwidth}{!}{%
  \setlength{\tabcolsep}{3pt}
\begin{tabular}{@{}|c|c|rrr|rrr|rrr|rrr|@{}}
    \hline
    \multirow{2}{*}{{\small\textbf{V}}} & \multirow{2}{*}{{\small\textbf{C}}} & \multicolumn{3}{c|}{Concrete} & \multicolumn{3}{c|}{A1} & \multicolumn{3}{c|}{A2} & \multicolumn{3}{c|}{A3}\\ \hhline{|~|~|---|---|---|---|}
    && \multicolumn{1}{c|}{\textbf{\#St}} & \multicolumn{1}{c|}{\textbf{M}} & \multicolumn{1}{c|}{\textbf{t}} & \multicolumn{1}{c|}{\textbf{\#St}} & \multicolumn{1}{c|}{\textbf{M}} & \multicolumn{1}{c|}{\textbf{t}}& \multicolumn{1}{c|}{\textbf{\#St}} & \multicolumn{1}{c|}{\textbf{M}} & \multicolumn{1}{c|}{\textbf{t}} & \multicolumn{1}{c|}{\textbf{\#St}} & \multicolumn{1}{c|}{\textbf{M}} & \multicolumn{1}{c|}{\textbf{t}}\\ \hline
    1 & 1 & 1.1e+2 & 9 & 0 & 1.1e+2 & 9 & 0 & 4.1e+1 & 9 & 0 & 4.1e+1 & 9 & 0 \\
    1 & 2 & 1.5e+2 & 9 & 0 & 1.5e+2 & 9 & 0 & 5.3e+1 & 9 & 0 & 5.3e+1 & 9 & 0 \\
    1 & 3 & 1.9e+2 & 9 & 0 & 1.9e+2 & 9 & 0 & 6.5e+1 & 9 & 0 & 6.5e+1 & 9 & 0 \\
    2 & 1 & 6.1e+3 & 9 & 0 & 6.1e+3 & 9 & 0 & 8.2e+2 & 9 & 0 & 4.9e+2 & 9 & 0 \\
    2 & 2 & 1.1e+4 & 9 & 0 & 1.1e+4 & 9 & 0 & 1.4e+3 & 9 & 0 & 8.4e+2 & 10 & 0 \\
    2 & 3 & 1.7e+4 & 10 & 0 & 1.7e+4 & 10 & 0 & 2.1e+3 & 9 & 0 & 1.3e+3 & 10 & 0 \\
    3 & 1 & 3.3e+5 & 38 & 1 & 3.3e+5 & 38 & 1 & 1.6e+4 & 10 & 0 & 5.6e+3 & 10 & 0 \\
    3 & 2 & 7.5e+5 & 75 & 2 & 7.5e+5 & 75 & 2 & 3.5e+4 & 12 & 0 & 1.3e+4 & 11 & 0 \\
    3 & 3 & 1.4e+6 & 137 & 4 & 1.4e+6 & 137 & 4 & 6.4e+4 & 14 & 0 & 2.4e+4 & 12 & 0 \\
    4 & 1 & 1.7e+7 & 1\,533 & 73 & 1.7e+7 & 1\,534 & 72  & 3.1e+5 & 36 & 1 & 6.1e+4 & 15 & 0 \\
    4 & 2 & 5.2e+7 & 4\,535 & 218 & 5.2e+7 & 4\,535 & 217 & 8.6e+5 & 83 & 3 & 1.9e+5 & 27 & 2 \\
    4 & 3 & 1.2e+8 & 10\,743 & 531 & 1.2e+8 & 10\,743 & 528 & 1.9e+6 & 176 & 8 & 4.5e+5 & 49 & 6 \\
    5 & 1 & \multicolumn{3}{c|}{\memout} & \multicolumn{3}{c|}{\memout} & 5.8e+6 & 507 & 30 & 6.5e+5 & 67 & 6 \\
    5 & 2 & \multicolumn{3}{c|}{\memout} & \multicolumn{3}{c|}{\memout} & 2.1e+7 & 1\,841 & 113 & 2.7e+6 & 243 & 34 \\
    5 & 3 & \multicolumn{3}{c|}{\memout} & \multicolumn{3}{c|}{\memout} & 5.9e+7 & 5\,020 & 325 & 7.9e+6 & 687 & 131 \\
    6 & 1 & \multicolumn{3}{c|}{\memout} & \multicolumn{3}{c|}{\memout} & 1.1e+8 & 9\,170 & 722 & 6.7e+6 & 583 & 71 \\
    6 & 2 & \multicolumn{3}{c|}{\memout} & \multicolumn{3}{c|}{\memout} & \multicolumn{3}{c|}{\memout} & 3.7e+7 & 3\,253 & 570 \\
    6 & 3 & \multicolumn{3}{c|}{\memout} & \multicolumn{3}{c|}{\memout} & \multicolumn{3}{c|}{\memout} & 1.4e+8 & 12\,190 & 2\,843 \\
    7 & 1 & \multicolumn{3}{c|}{\memout} & \multicolumn{3}{c|}{\memout} & \multicolumn{3}{c|}{\memout} & 6.8e+7 & 5\,964 & 874 \\
    7 & 2 & \multicolumn{3}{c|}{\memout} & \multicolumn{3}{c|}{\memout} & \multicolumn{3}{c|}{\memout} & \multicolumn{3}{c|}{\memout} \\
    \hline
\end{tabular}
 }
  \caption{Experimental results for model checking $\varphi_2$ (FPA) on concrete and abstract models with no re-voting}
  \label{tab:results-tasv2}
\end{table}}

\subsection{Experiments and Results}

For the experiments, we used a modified version of the open-source tool \textsc{EasyAbstract}\footnote{\url{https://tinyurl.com/EasyAbstract4Uppaal}}, which implements the algorithms from~\cite{Jamroga23easyabstract} for Uppaal, to automate generation of the abstract models for each configuration of the system.
Furthermore, we used a coarser variant of the local domain over-approximation based on the agent graph (or its template), where all synchronisation labels are simply discarded.
By doing so, we were able to further reduce the memory and time usage.
Due to FAA and FPA properties relating to the similar subset of atomic propositions, we employed the same abstractions in both cases:
\begin{itemize}
  \item[A1:] removes variables \texttt{tally,freq} in Election Authority TAG;
  \item[A2:] in addition to A1 removes variable \texttt{mode} in Voter TAG(s);
  \item[A3:] in addition to A2 removes variables \texttt{voted,p,np} in all Voter TAGs except one.
\end{itemize}

We report experimental results in the~\cref{tab:results-tasv1,tab:results-tasv2,tab:results-tasv3}.\footnote{%
  The verification was performed on the machine with AMD EPYC 7302P 16-Core 1.5 GHz CPU, 32 GB RAM, Ubuntu 22.04, running \texttt{verifyta} command-line utility from Uppaal v4.1.24 distribution.
  The source code of the models and auxiliary scripts for running verification can be found on: \url{https://github.com/aamas2025submission}.
} 
The first two columns indicate the configuration, that is the number of voters (``V'') and candidates (``C'');
next, the two groups of columns aggregate the details of verification of the forced abstention attack ( ``FAA'') and forced participation attack (``FPA'')  against the concrete and abstract (``A2'' and ``A3'') TMAS graphs, within each group, when property is satisfied, the column ``\#St'' indicates the number of symbolic states (as defined by Uppaal), ``M'' the amount of RAM used (in MiB\footnote{1 MiB = 2$^{20}$ Bytes, for more details see~\cite{IEC60027}.}), ``t'' the time spent by CPU (in sec\extended{onds}, rounded to the nearest whole number).%
\footnote{The time spent on computing the abstract TMAS graph was negligible (< 1s) in all cases considered and is thus not included in the table. }
When model checking $\varphi_1$ (FAA) on models with re-voting allows, the verifier was always returning a counter-example run within <1 sec time. %

It is noteworthy that with the help of abstraction, we were able to almost double
the number of agents in the configuration before running out of memory due to the
state space explosion  (from 3--4 to 6--7 voters), effectively reducing the use of
memory and time resources by up to two orders of magnitude.
Note also that the effectiveness of the method heavily depended on the choice of the variables to remove -- for instance, removing variables \texttt{tally,freq} (abstraction A1) did not prove useful at all.

{\begin{table}[!t]
  \setlength{\abovecaptionskip}{-1.7ex} 
  \setlength{\belowcaptionskip}{3pt}
  \centering
  \resizebox{\columnwidth}{!}{%
  \setlength{\tabcolsep}{3pt}
\begin{tabular}{@{}|c|c|rrr|rrr|rrr|rrr|@{}}
    \hline
    \multirow{2}{*}{{\small\textbf{V}}} & \multirow{2}{*}{{\small\textbf{C}}} & \multicolumn{3}{c|}{Concrete} & \multicolumn{3}{c|}{A1} & \multicolumn{3}{c|}{A2} & \multicolumn{3}{c|}{A3}\\ \hhline{|~|~|---|---|---|---|}
    && \multicolumn{1}{c|}{\textbf{\#St}} & \multicolumn{1}{c|}{\textbf{M}} & \multicolumn{1}{c|}{\textbf{t}} & \multicolumn{1}{c|}{\textbf{\#St}} & \multicolumn{1}{c|}{\textbf{M}} & \multicolumn{1}{c|}{\textbf{t}}& \multicolumn{1}{c|}{\textbf{\#St}} & \multicolumn{1}{c|}{\textbf{M}} & \multicolumn{1}{c|}{\textbf{t}} & \multicolumn{1}{c|}{\textbf{\#St}} & \multicolumn{1}{c|}{\textbf{M}} & \multicolumn{1}{c|}{\textbf{t}} \\ \hline
    1 & 1 & 2.7e+02 & 9 & 0 & 2.7e+02 & 9 & 0 & 1.0e+02 & 9 & 0 & 1.0e+02 & 9 & 0 \\
    1 & 2 & 3.7e+02 & 9 & 0 & 3.7e+02 & 9 & 0 & 1.4e+02 & 9 & 0 & 1.4e+02 & 9 & 0 \\
    1 & 3 & 4.6e+02 & 9 & 0 & 4.6e+02 & 9 & 0 & 1.8e+02 & 9 & 0 & 1.8e+02 & 9 & 0 \\
    2 & 1 & 3.4e+04 & 11 & 0 & 3.4e+04 & 12 & 0 & 5.1e+03 & 9 & 0 & 1.9e+03 & 9 & 0 \\
    2 & 2 & 6.3e+04 & 14 & 0 & 6.3e+04 & 14 & 0 & 9.6e+03 & 10 & 0 & 3.6e+03 & 10 & 0 \\
    2 & 3 & 1.0e+05 & 17 & 0 & 1.0e+05 & 17 & 0 & 1.6e+04 & 10 & 0 & 5.9e+03 & 10 & 0 \\
    3 & 1 & 4.2e+06 & 355 & 16 & 4.2e+06 & 355 & 16 & 2.4e+05 & 29 & 1 & 3.2e+04 & 12 & 0 \\
    3 & 2 & 1.1e+07 & 922 & 44 & 1.1e+07 & 922 & 43 & 6.3e+05 & 62 & 2 & 8.6e+04 & 17 & 1 \\
    3 & 3 & 2.2e+07 & 1\,877 & 98 & 2.2e+07 & 1\,877 & 95 & 1.3e+06 & 119 & 6 & 1.8e+05 & 26 & 3 \\
    4 & 1 & \multicolumn{3}{c|}{\memout} & \multicolumn{3}{c|}{\memout} & 1.1e+07 & 936 & 54 & 5.2e+05 & 53 & 4 \\
    4 & 2 & \multicolumn{3}{c|}{\memout} & \multicolumn{3}{c|}{\memout} & 3.9e+07 & 3\,356 & 217 & 1.9e+06 & 173 & 27 \\
    4 & 3 & \multicolumn{3}{c|}{\memout} & \multicolumn{3}{c|}{\memout} & 1.0e+08 & 8\,640 & 631 & 5.3e+06 & 461 & 109 \\
    5 & 1 & \multicolumn{3}{c|}{\memout} & \multicolumn{3}{c|}{\memout} & \multicolumn{3}{c|}{\memout} & 8.1e+06 & 679 & 88 \\
    5 & 2 & \multicolumn{3}{c|}{\memout} & \multicolumn{3}{c|}{\memout} & \multicolumn{3}{c|}{\memout} & 4.3e+07 & 3\,647 & 750 \\
    5 & 3 & \multicolumn{3}{c|}{\memout} & \multicolumn{3}{c|}{\memout} & \multicolumn{3}{c|}{\memout} & 1.5e+08 & 12\,821 & 3\,879 \\
    6 & 1 & \multicolumn{3}{c|}{\memout} & \multicolumn{3}{c|}{\memout} & \multicolumn{3}{c|}{\memout} & 1.2e+08 & 10\,151 & 1\,617 \\
    6 & 2 & \multicolumn{3}{c|}{\memout} & \multicolumn{3}{c|}{\memout} & \multicolumn{3}{c|}{\memout} & \multicolumn{3}{c|}{\memout} \\
    \hline
\end{tabular}
 }
  \caption{Experimental results for model checking $\varphi_2$ (FPA) on concrete and abstract models with re-voting allowed}
  \label{tab:results-tasv3}
\end{table}}

\section{Conclusions}
\label{sec:conclusion}

In this paper, we propose a new scheme for \emph{agent-based may abstractions of timed MAS}. 
The work extends the recent abstraction method~\cite{Jamroga23variableabstraction}, which was defined only for the untimed case.
We also ``lift'' the algorithms of~\cite{Jamroga23easyabstract} to operate on MAS graphs with clocks, time invariants, and timing guards.

Similarly to~\cite{Jamroga23variableabstraction,Jamroga23easyabstract}, our abstractions transform the specification of the system at the level of timed agent graphs, without ever generating the global model. 
An experimental evaluation, based on a scalable model of Estonian elections, have shown a very promising pattern of results. 
In all cases, computation of the abstract representation by our implementation took negligible time. 
Moreover, it allowed the Uppaal model checker to verify instances with state spaces that are several orders of magnitude larger. 
The experiments showed also that the effectiveness of the method depends on the right selection of variables to be removed; ideally, they should be provided by a domain expert.

In the future, we plan to refine \cref{alg:variable-approx} and \cref{alg:general-abstraction}, so that the approximation of local domain is computed over the pairs of location and zone, where the zone is an abstraction class of clock valuations that satisfy the same set of clock constraints occurring within the model and the formula~\cite{Penczek06advances}. 
Analogously, for the abstract TMAS generation, the labelled edges would be assigned the clock constraints corresponding to the possible zones of the source-location pair.
This way, we hope to obtain more refined abstract models. It remains to be seen if the approach will turn out computationally feasible, or lead to the generation of an enormous (though finite) number of regions.

\begin{acks}
The work has been supported by NCBR Poland and FNR Luxembourg under the PolLux/FNR-CORE
project SpaceVote (POLLUX-XI/14/SpaceVote/2023 and C22/IS/17232062/SpaceVote), the PHC Polonium project MoCcA (BPN/BFR/2023/1/00045),
CNRS IRP ``Le Tr\'{o}jk{\k a}t'', and ANR-22-CE48-0012 project BISOUS.
For the purpose of open access, and in fulfilment of the obligations arising from the grant agreement, the authors have applied a Creative Commons Attribution 4.0 International (CC BY 4.0) license to any Author Accepted Manuscript version arising from this submission.
\end{acks}

\clearpage
\bibliographystyle{ACM-Reference-Format}
\bibliography{other,wojtek-own,wojtek}

\end{document}